\renewcommand{\algorithmiccomment}[1]{\bgroup\hfill//~#1\egroup}
  \newcommand{\argmax}{\mathop{\rm argmax}\limits}
\begin{document}
\title{Faster Algorithms for Evacuation Problems in Networks with the Single Sink of Small Degree \thanks{Supported by JSPS KAKENHI Grant Numbers 19H04068, 20K19746, 20H05794.}}
\titlerunning{Faster Algorithms for Evacuation Problems with the Small Degree Sink}
% If the paper title is too long for the running head, you can set
% an abbreviated paper title here
%
\author{
Yuya Higashikawa \and
Naoki Katoh \and
Junichi Teruyama \and
Yuki Tokuni
}

\authorrunning{Y.~Higashikawa~et~al.}
% First names are abbreviated in the running head.
% If there are more than two authors, 'et al.' is used.
%
\institute{University of Hyogo, Kobe, Japan\\
\email{\{higashikawa,naoki.katoh,junichi.teruyama,ad21o040\}@gsis.u-hyogo.ac.jp}
%\email{\{higashikawa,naoki.katoh,junichi.teruyama\}@sis.u-hyogo.ac.jp, ad21o040@gsis.u-hyogo.ac.jp}
}

\maketitle              % typeset the header of the contribution
\begin{abstract}
In this paper, we propose new algorithms for {\it evacuation problems} defined on {\it dynamic flow networks}.
% of bounded few degrees with a uniform edge capacity. 
A dynamic flow network is a directed graph in which {\it source} nodes are given supplies (i.e., the number of evacuees) 
and a single {\it sink} node is given a demand (i.e., the maximum number of acceptable evacuees).
The evacuation problem seeks a dynamic flow that sends all supplies from sources to the sink  such that its demand is satisfied in the minimum feasible time horizon.
%the minimum time required to satisfy supply/demand, and the flow that achieves it.
For this problem,
the current best algorithms are developed by Schl\"{o}ter~(2018) and Kamiyama~(2019), which run in strongly polynomial time but with high-order polynomial time complexity because they use submodular function minimization as a subroutine.  

In this paper, we propose new algorithms that do not explicitly execute submodular function minimization,
and we prove that they are faster than those by Schl\"{o}ter~(2018) and Kamiyama~(2019)
when an input network is restricted such that the sink has a small in-degree and every edge has the same capacity.
% in general dynamic networks with uniform edge capacity. 
%In addition, if the number of sinks and the in-degree of the sinks are constant, the time complexity of the algorithm is polynomial time.
%Skutella2017論文に合わせてグラフの頂点をnode, 多面体の頂点をvertexとしています．

\keywords{dynamic flow \and evacuation problem \and quickest transshipment problem \and polynomial-time algorithm \and base polytope}
\end{abstract}
\section{Introduction}
The network routing problem taking into account the movement of commodities over time is important from the viewpoint of evacuation planning. 
To model such movement over time,
Ford and Fulkerson~\cite{Ford1962} introduced a {\it dynamic flow network}, 
which is a directed graph in which {\it source} nodes are given supplies (i.e., the number of evacuees), 
{\it sink} nodes are given a demand (i.e., the maximum number of acceptable evacuees), and
edges are given capacities and transit times. 
Here, the capacity of an edge bounds the rate at which flow can enter the edge per unit time, 
and the transit time represents the time required for evacuees to travel across the edge.

The {\it evacuation problem} is one of the most basic problems defined on a dynamic flow network.
In this problem, given a dynamic flow network with multiple sources and a single sink,
the goal is to find a dynamic flow that sends all supplies from sources to the sink such that its demand is satisfied in the minimum feasible time horizon.
For the evacuation problem, 
Kamiyama~\cite{Kamiyama2019} and Schl\"{o}ter~\cite{Schloter2018} independently proposed the current best $\tilde{O}(m^2k^5 + m^2 n k)$ time algorithms, where $n$ is the number of nodes, $m$ is the number of edges, and $k$ is the number of sources. 
In particular, it is known that the minimum feasible time horizon can be calculated in $\tilde{O}(m^2k^4 + m^2 n k)$ time~\cite{Schloter2018}.
Both the algorithms by \cite{Kamiyama2019,Schloter2018} use submodular function minimization as a subroutine,
which results in high-order polynomial time complexities.

However, 
for some restricted classes of networks,
there are efficient algorithms that do not call submodular function minimization algorithms~\cite{Kamiyama2006,Mamada2006}.
% Kamiyama~et~al. \cite{Kamiyama2006} and Mamada~et~al. \cite{Mamada2006}, respectively 
% proposed an algorithm that does not use the submodular function minimization algorithm as a subroutine by restricting the form of the network.
Mamada et~al.~\cite{Mamada2006} proposed an $O(n \log^2 n)$ time algorithm for tree networks, %
and Kamiyama et~al.~\cite{Kamiyama2006} proposed an $O(n\log n)$ time algorithm for restricted grid networks in which the edge capacities and transit times on edges are uniform and the edges are oriented such that supplies can take only the shortest paths to a single sink. 
However, for evacuation problems in more general classes of networks, all known strongly polynomial time algorithms depend on submodular function minimization.

% \noindent 
{\bf Our contribution:}
In this paper, we propose a new algorithmic framework for evacuation problems that does not depend on submodular function minimization.
In particular, 
we present an efficient algorithm for networks in which the sink has a small in-degree and every edge has the same capacity.
Our main theorem is stated below.
\begin{theorem}\label{theo:cal_min_flow}
Given a dynamic flow network ${\cal N}$ with uniform edge capacities and a supply/demand function $w$, 
the evacuation problem can be solved in $\tilde{O}(m n d k^{d} + m^2k^2)$ time,
where $n$ is the number of nodes, $m$ is the number of edges, 
$d$ is the in-degree of the sink, and $k$ is the number of sources. 
\end{theorem}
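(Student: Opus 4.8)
The plan is to first reduce the evacuation problem to computing the minimum feasible time horizon $\Theta$, and then to construct an optimal dynamic flow for that horizon. For the first part I would use the standard feasibility characterization: a time horizon $\theta$ is feasible if and only if
$$ w(X) \le o_\theta(X) \quad \text{for every subset } X \text{ of sources}, $$
where $o_\theta(X)$ denotes the maximum amount of flow that can be sent from the sources in $X$ to the sink $t$ within time $\theta$. Since $o_\theta$ is submodular and monotone in $\theta$, the minimum feasible $\Theta$ is the smallest $\theta$ at which $\min_X\bigl(o_\theta(X)-w(X)\bigr)$ reaches $0$. Evaluating this minimum naively is exactly the submodular function minimization that the algorithms of Schl\"oter and Kamiyama invoke; the whole point is to avoid it by exploiting the two structural assumptions.

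The key structural step is to rewrite $o_\theta(X)$ using the Ford--Fulkerson temporally repeated flow, so that
$$ o_\theta(X) = \max_{f}\Bigl(\theta\,|f| - \sum_{e}\tau_e f_e\Bigr), $$
the maximum being over static flows $f$ from $X$ to $t$, equivalently a min-cost (by transit time) flow. Let $e_1,\dots,e_d$ be the in-edges of the sink, with tails $u_1,\dots,u_d$, and let $c$ be the common edge capacity. Every unit of flow reaching $t$ must cross one of $e_1,\dots,e_d$, so the flow value is at most $dc$ and an optimal flow decomposes into streams routed along shortest source-to-$u_j$ routes in a min-cost flow before entering $t$ through $e_j$. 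Consequently $o_\theta(X)$ is governed by the shortest-path distances from the sources to the $d$ tails, a quantity I can precompute with $d$ shortest-path computations in $O(mn)$ time each.

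I then expect to prove a combinatorial characterization of the minimizer. Because flow is forced through only $d$ bottleneck edges and all capacities are equal, an uncrossing argument on the submodular function $o_\theta$ should show that some minimizer of $o_\theta(\cdot)-w(\cdot)$ is of a restricted form: it is determined by choosing, for each sink in-edge $e_j$, a threshold source in the order of arrival times to $u_j$. There are at most $O(k^d)$ such candidate sets, and for each one I can evaluate $o_\theta(X)-w(X)$ and solve for the critical value of $\theta$ at which it vanishes using the precomputed distances, at a cost of $\tilde O(mnd)$ per candidate. Taking the smallest critical value over all candidates yields $\Theta$ in $\tilde O(mndk^d)$ time. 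This characterization is the heart of the argument and the step I expect to be the main obstacle: I must argue that no irregular subset can beat all the structured ones, which is where submodularity, the uniform capacity, and the bound $d$ on the in-degree have to be combined carefully.

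Finally, with $\Theta$ in hand I would construct an explicit optimal dynamic flow. Following the transshipment framework, this amounts to computing a flow over the base polytope of $o_\Theta$ and decomposing it into at most $k$ temporally repeated path flows; since each such step is a min-cost (lexicographic) flow computation on a network with $O(m)$ edges, and the process repeats $O(k)$ times while maintaining $O(k)$ supply constraints, the construction runs in $\tilde O(m^2k^2)$ time. Adding the two phases gives the claimed $\tilde O(mndk^d + m^2k^2)$ bound.
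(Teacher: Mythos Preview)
Your two-phase plan matches the paper's structure, and the time budget is right: phase one enumerates $O(k^d)$ candidate source sets and phase two expresses $w$ as a convex combination of at most $k$ lex-max flows on the base polytope of $o_{T^*}$. But the central combinatorial step you propose for phase one --- that some maximizer of $\theta(\cdot)$ is determined by choosing, for each in-edge $e_j$, a threshold source according to arrival times to the tail $u_j$ in the \emph{original} network --- is not what the paper proves, and as stated the uncrossing sketch does not go through.

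The paper's $O(k^d)$ family $\hat{\mathcal A}$ is indexed not by per-edge thresholds but by ordered tuples $(v_1,\dots,v_p)\in(S^+)^p$, $p\le d$, recording the \emph{origins} of the successive shortest paths produced by the min-cost-flow algorithm; the crucial point is that these paths live in \emph{residual} networks, so the $i$-th path depends on the first $i-1$, and the same source (and the same in-edge) can recur. Two source sets admitting the same tuple have identical $o^\theta$, so among them one takes the unique maximal set, and that maximal set is computed by removing, at stage $i$, any source closer to $s^-$ than $v_i$ in the \emph{current} residual graph. Your thresholds on distances to $u_j$ in the original static network do not capture this residual interaction: after one augmentation the effective distance from a source to $s^-$ can change, so a family defined purely by original-network arrival times to the $u_j$'s need not contain a maximizer of $\theta$. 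This is exactly the place you flagged as the main obstacle, and the paper's resolution is the successive-shortest-path parameterization, not an uncrossing argument.

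For phase two you also omit the ingredient that makes the $\tilde O(m^2k^2)$ bound hold: to walk along the base polytope from one vertex toward $w$ you must identify which facet $x(A)=o^{T^*}(A)$ the half-line hits first, and doing this without a submodular-minimization oracle relies on the fact (the paper's Lemma~\ref{lemm:eqvlt_bp}) that every facet of $B(o^{T^*})$ already corresponds to some $A\in\hat{\mathcal A}$ from phase one. Without that lemma the facet search is over $2^k$ sets, not $O(k^d)$.
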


\noindent
Note that when $d \le 5$, our algorithm runs in $\tilde{O}(m n k^{5}+ m^2 k^2)$ time, which is faster than the algorithms by \cite{Kamiyama2019,Schloter2018}.
For real-world applications, because many regions have a grid-pattern road network, the degree of almost all intersections in the road network
is at most 4~\cite{Stanislawski1947};
thus, the assumption of a sink with a small degree is natural.

In addition, we propose an even faster algorithm for grid networks in settings more general than those in~\cite{Kamiyama2006}, 
where $k=O(n)$ and
for every pair of adjacent nodes $u$ and $v$, there are two edges $(u,v)$ and $(v,u)$ with the same transit time and capacity.
While we immediately have an $\tilde{O}(n^6)$ time algorithm for this case by applying Theorem~\ref{theo:cal_min_flow} (because $m = O(n)$ and $d \le 4$ in a grid network),
we obtain a faster $\tilde{O}(n^4)$ time algorithm based on special properties of grid networks.

{\bf Main ideas and techniques:}
To solve the evacuation problem, the basic approach that we also employ is to compute the minimum feasible time horizon first and then find the corresponding dynamic flow, called the quickest flow.
For the former phase, \cite{Fleischer1998} showed that a trivial solution space is $O(2^k)$ in size, and for the latter phase, Schl\"{o}ter and Skutella~\cite{Skutella2017} proved that the quickest flow is obtained via a convex combination of at most $k$ vertices of a base polytope with $O(2^k)$ facets defined on some submodular function.
Our technical advancement is in proving a solution space of $O(k^d)$ size for computing the minimum feasible time horizon,
and the base polytope for the quickest flow also has $O(k^d)$ facets.
Thanks to the small number of facets of the base polytope, we also propose a method for efficiently determining vertices and corresponding coefficients one by one for a convex combination representation of the quickest flow.

{\bf Organization:}
This paper is organized as follows.
In Section~\ref{sec:preliminary}, we introduce notations and important concepts used throughout the paper.
In Section~\ref{sec:cal_opt_T}, we propose an algorithm to compute the minimum feasible time horizon.
In Section~\ref{sec:cal_flow}, we propose an algorithm to find the dynamic flow that sends all supplies from sources to the sink within the minimum feasible time horizon. 
In Section~\ref{sec:grid_net}, we describe a more efficient algorithm for grid networks. 
In Section~\ref{sec:quickest_p}, we conclude by discussing the application of our algorithm to the {\it quickest transshipment problem}, 
which is a generalization of the evacuation problem defined on dynamic flow networks with multiple sinks.

\section{Preliminaries}\label{sec:preliminary} 

\subsection{Notations and Problem Definition}\label{subsec:notations} 

Let $\mathbb{R}_+$ denote the set of positive real values.
A dynamic flow network ${\cal N}$ is given as a 5-tuple
${\cal N} = (D = (V, E), u, \tau, S^+,  S^-)$, 
where $D=(V, E)$ is a directed graph with node set $V$ and 
edge set $E$, 
$u$ is a capacity function $u: E \rightarrow \mathbb{R}_+$, 
$\tau$ is a transit time function $\tau: E \rightarrow \mathbb{R}_+$, 
and $S^+ \subseteq V$ and $S^- \subseteq V$ are sets of sources and sinks, respectively.
For a path $P$ on ${\cal N}$, let $|P|$ denote the total transit time along $P$, i.e., $|P| = \sum_{e \in P} \tau(e)$.
Throughout the paper, $u$ is a constant function, so we abuse $u$ as the constant capacity of every edge.
Furthermore, in the evacuation problem, $\mathcal{N}$ contains a single sink denoted by $s^-$ (i.e., $S^-=\{s^- \}$).
For each node $v \in V$, let $\delta^{+}(v)$ and $\delta^{-}(v)$ denote the set of 
out-going edges from $v$ and in-coming edges to $v$,
respectively. 
Let $|\delta^{+}(v)|$ and $|\delta^{-}(v)|$ be the out-degree and in-degree of node $v$, respectively. 
We use $n$, $m$, and $k$ as the cardinalities of $V$, $E$, and $S^+$, respectively.

As an input for the evacuation problem, 
we are given a dynamic flow network $\mathcal{N}$ and a supply/demand function $w: V \rightarrow \mathbb{R}$ that represents the amount of supply/demand at the sources/sinks. 
The value of the function $w$ is such that $w(v) > 0$ for $v \in S^+$, $w(v) < 0$ for $v \in S^-$ (i.e., for $v = s^{-}$), and $w(v) = 0$ for $v \in V \setminus (S^{+}\cup S^{-})$, and %
$\sum_{ v \in V } w(v)=0$. 
For any terminal subset $A \subseteq S^{+}\cup S^-$, we define $w(A)\coloneqq \sum_{s\in A} w(s)$. 

On a dynamic flow network $\mathcal{N}$, a {\it dynamic flow} $f$ is defined as a function $f: E \times \mathbb{R}_+ \rightarrow \mathbb{R}_+$, where $f(e, \theta)$ represents the flow rate entering edge $e$ at time $\theta$.
Let us consider the following constraints for a dynamic flow $f$:
\begin{equation}\label{eq:capacity_const}
0 \leq f(e,\theta) \leq u \quad \text{for each } e \in E, \text{ for each } \theta \in [0,\infty),
\end{equation}
\begin{equation}\label{eq:conserve_const}
\begin{split}
\int^{\theta}_{0}\left(\sum_{e\in \delta^{+}(v)}f(e,t)-\sum_{e\in \delta^{-}(v)}f(e,t-\tau(e)) \right)dt &\leq \max\{w(v),0\} \\
 & \text{for each } v \in V, \text{ for each } \theta \in [0,\infty).
\end{split}
\end{equation}
The constraints~\eqref{eq:capacity_const} and \eqref{eq:conserve_const} are called the {\it capacity constraint} and the {\it conserve constraint}, respectively. 
The conserve constraint~\eqref{eq:conserve_const} means that for any time $\theta$ and any node $v$, 
the amount of flow out of $v$ within time $\theta$ is at most the amount of flow entering $v$ within $\theta$ plus the amount of the supply at $v$.
Furthermore, for some $T \in \mathbb{R}_+$, consider the following constraint (known as the {\it supply/demand constraint}):
\begin{equation}\label{eq:demand_const}
\int^{T}_{0}\left(\sum_{e\in \delta^{+}(v)}f(e,t)-\sum_{e\in \delta^{-}(v)}f(e,t-\tau(e)) \right)dt = w(v) \text{ for each } v \in V.
\end{equation}
The supply/demand constraint~\eqref{eq:demand_const} implies that for each node $v \in V$, the net amount of flow
accumulated at $v$ within time $T$ equals its supply or demand. 
If a dynamic flow $f$ on $\mathcal{N}$ satisfies the above constraints~\eqref{eq:capacity_const}, \eqref{eq:conserve_const}, and \eqref{eq:demand_const}, then
$f$ is said to be \textit{feasible} w.r.t.\ $(w,T)$,
and $T$ is called a \textit{feasible time horizon} w.r.t.\ $(\mathcal{N},w)$.
Throughout the paper, $T^*$ denotes the minimum feasible time horizon w.r.t.\ $(\mathcal{N},w)$.
We call $f^*$ is the \textit{quickest flow} if $f^*$ is a feasible dynamic flow on $\mathcal{N}$ w.r.t.\ $(w,T^*)$. 
Then the evacuation problem requires computing $T^*$ and finding a corresponding dynamic flow.
The problem is described precisely as follows.\\

\noindent
{\sc Evacuation Problem}
\begin{itemize}
\item[{\bf Input:}] A dynamic flow network ${\cal N} = (D = (V, E), u, \tau, S^+,  S^-=\{s^-\})$ and a supply/demand function $w$.
\item[{\bf Goal:}] Find a solution $f$ of the problem
\begin{alignat*}{2}
{\rm minimize}      & \quad & & T                                                               \\
{\rm subject \ to}  &       & & f \text{ is a feasible dynamic flow on } \mathcal{N} \text{ w.r.t. } (w,T). 
\end{alignat*}
\end{itemize}

%\subsection{Maximum Dynamic Flow and Minimum Feasible Time Horizon}
\subsection{Maximum Dynamic Flows}\label{subsec:MaxDynamicFlow}

Given a dynamic flow network $\mathcal{N}$, %
for a subset of the sources and sinks $A\subseteq S^+\cup S^-$ and a time horizon $\theta \in \mathbb{R}_+$, 
let $o^\theta (A)$ be the maximum amount of flow that can reach the sinks in $S^-\backslash A $ from the sources in $A$ within time horizon $\theta$. 
Note that $o^{\theta}(S^+\cup S^-)=0$ always holds.
We call the flow corresponding to $o^\theta (A)$ the \textit{maximum dynamic flow} from $A$ within $\theta$.
We assume that there is no restriction on the amount of flow out of each source of $A$. 
The following theorem by Fleischer and Tardos~\cite{Fleischer1998} %
gives a property of a feasible time horizon
\footnote{
Klinz~\cite{Klinz1994} showed the same property of Theorem~\ref{theo:Klinz} on a discrete-time model, and using this property, Fleischer and Tardos~\cite{Fleischer1998} showed Theorem~\ref{theo:Klinz} on a continuous-time model. Moreover, the original statement of Theorem~\ref{theo:Klinz}~\cite{Klinz1994} gives a similar property for a dynamic network ${\cal N}$ containing multiple sinks.
 }.

 \begin{theorem}[\cite{Fleischer1998}]\label{theo:Klinz}
 For a dynamic flow network ${\cal N}$, a supply/demand function $w$, and a time horizon $T \in \mathbb{R}_+$, 
 there exists a feasible dynamic flow on $\mathcal{N}$ w.r.t.\ $(w,T)$ if and only if 
 \begin{equation}\label{eq:min_oTA_wA_0}
 \min\{o^T(A)-w(A)\mid A\subseteq S^+\}\geq 0.
 \end{equation}
 \end{theorem}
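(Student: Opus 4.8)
The plan is to prove the two directions separately, with the ``only if'' direction being a routine flow-decomposition argument and the ``if'' direction being the substantive part, which I would handle by reducing the dynamic transshipment to a single-commodity maximum flow and invoking the max-flow/min-cut theorem. For necessity, suppose $f$ is feasible w.r.t. $(w,T)$ and fix $A\subseteq S^+$. By the supply/demand constraint~\eqref{eq:demand_const} every source emits its full supply within $T$, and the only node able to absorb flow is $s^-$. Decomposing $f$ into path flows over time and discarding every path not originating in $A$ yields a dynamic flow that routes exactly $w(A)$ units from the sources in $A$ to $s^-$ within $T$; since discarding flow only decreases the load on each edge, this sub-flow still obeys the capacity constraint~\eqref{eq:capacity_const}. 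Hence there is a dynamic flow from $A$ to $s^-$ within $T$ of value $w(A)$, so $o^T(A)\ge w(A)$, which is~\eqref{eq:min_oTA_wA_0}.

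For sufficiency, I would introduce a super-source $s^*$ joined to each $s\in S^+$ by an edge of capacity $w(s)$ and zero transit time, and a super-sink absorbing the flow that reaches $s^-$ over $[0,T]$. Passing to the (Ford--Fulkerson) time-expanded network turns ``a feasible transshipment exists'' into ``a static $s^*$-to-super-sink flow of value $w(S^+)$ exists'', and by the maximum dynamic flow theorem this holds iff every cut separating $s^*$ from the super-sink has capacity at least $w(S^+)$. The key step is to read off the structure of such cuts: for any cut, let $A\subseteq S^+$ be the set of sources whose $s^*$-edge is \emph{not} cut; then the cut must separate $A$ from $s^-$ inside the network and must sever the $s^*$-edges of $S^+\setminus A$, so its capacity is at least $w(S^+\setminus A)$ plus the minimum capacity of a cut separating $A$ from $s^-$ within time $T$. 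By max-flow/min-cut applied to the sub-network with source set $A$, this latter quantity equals $o^T(A)$. Minimizing over cuts therefore gives $\min_{A\subseteq S^+}\bigl(w(S^+\setminus A)+o^T(A)\bigr)$, and the feasibility condition ``$\ge w(S^+)$'' rearranges, using $w(S^+)=w(A)+w(S^+\setminus A)$, exactly into $o^T(A)\ge w(A)$ for all $A$.

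The delicate point is the passage to continuous time: the argument above is transparent in the discretized time-expanded network, but making it rigorous for the continuous model requires the Ford--Fulkerson maximum dynamic flow theorem together with a condensed time-expansion, so that min-cuts are still described by static cuts that correspond to the quantities $o^T(A)$. One must also verify that every minimum cut decomposes cleanly along the partition $(A,\,S^+\setminus A)$ rather than severing a super-source edge only partially, which is what lets the min-cut value collapse to the clean expression $\min_A\bigl(w(S^+\setminus A)+o^T(A)\bigr)$; this is exactly the place where the submodularity of the map $A\mapsto o^T(A)$ does the work, and an alternative route to sufficiency is to phrase the condition as membership of $w|_{S^+}$ in the polymatroid defined by the submodular rank function $o^T$.
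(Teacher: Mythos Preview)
The paper does not prove this theorem at all: it is stated as a known result, attributed to Fleischer~\cite{Fleischer1998} (with a footnote crediting Klinz~\cite{Klinz1994} for the discrete-time version), and then used as a black box to derive Corollary~\ref{coro:T_theta}. There is therefore no in-paper proof to compare your proposal against.

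That said, your sketch is the standard argument found in the cited literature and is sound. The necessity direction via path decomposition is routine, and your sufficiency direction---attaching a super-source with capacity-$w(s)$ arcs, reducing to a single maximum dynamic flow, and reading off the min-cut as $\min_{A\subseteq S^+}\bigl(w(S^+\setminus A)+o^T(A)\bigr)$---is exactly the classical proof. Your caveat about the continuous-time model is well placed: this is precisely why the paper cites Fleischer--Tardos~\cite{Fleischer1998} rather than only Klinz. The polymatroid-membership reformulation you mention at the end is in fact the viewpoint the paper adopts later, in Lemma~\ref{lemm:base_polytope_feasible}.
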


For a subset of sources $A \subseteq S^+$, 
we define the \textit{minimum required time} $\theta(A)$ as 
\begin{equation}\label{eq:thetaA}
\theta(A) \coloneqq \min \{ \theta \mid o^{\theta}(A)-w(A) \ge 0 \}.
\end{equation}
Note that as shown later in \eqref{eq:oTA}, 
$o^T(A)$ is a nondecreasing continuous function in $T$ of range $[0, \infty)$.
Therefore,  by Theorem~\ref{theo:Klinz},
we immediately have the following corollary. 

\begin{corollary}\label{coro:T_theta}
For a dynamic flow network $\mathcal{N}$ and a supply/demand function $w$, 
it holds that 
\begin{equation*}\label{eq:min_T_max_theta}
T^*=\max\{\theta(A) \mid A \subseteq S^+ \}.
\end{equation*}
\end{corollary}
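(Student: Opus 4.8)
The plan is to derive the corollary directly from the feasibility characterization in Theorem~\ref{theo:Klinz} by converting the inequality condition on $o^T$ into a threshold condition on $T$. First I would recall that, by Theorem~\ref{theo:Klinz}, a time horizon $T \in \mathbb{R}_+$ is feasible w.r.t. $(\mathcal{N}, w)$ if and only if $o^T(A) - w(A) \geq 0$ holds simultaneously for every $A \subseteq S^+$. Since $T^*$ is by definition the minimum feasible $T$, it suffices to show that the set of feasible time horizons is exactly the ray $[\max_A \theta(A), \infty)$; the left endpoint of this ray is then $T^*$.

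The central step is to show, for each fixed $A \subseteq S^+$, that $o^T(A) - w(A) \geq 0$ holds if and only if $T \geq \theta(A)$. Here I would invoke the fact (stated in the text just before the corollary, to be established via \eqref{eq:oTA}) that $\theta \mapsto o^\theta(A)$ is non-decreasing and continuous on $[0,\infty)$ with range $[0,\infty)$. Continuity guarantees that the infimum defining $\theta(A)$ in \eqref{eq:thetaA} is attained, so $o^{\theta(A)}(A) \geq w(A)$; monotonicity then yields $o^T(A) \geq w(A)$ for all $T \geq \theta(A)$. Conversely, the minimality in the definition of $\theta(A)$ gives $o^T(A) < w(A)$ whenever $T < \theta(A)$. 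This establishes the desired equivalence for each single $A$.

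Taking the conjunction over all $A \subseteq S^+$, the condition of Theorem~\ref{theo:Klinz} becomes: $T$ is feasible if and only if $T \geq \theta(A)$ for every $A \subseteq S^+$, which is in turn equivalent to $T \geq \max\{\theta(A) \mid A \subseteq S^+\}$. Because $S^+$ is finite, this maximum ranges over finitely many values and is therefore attained, so the set of feasible time horizons is precisely $[\max_A \theta(A), \infty)$ and its minimum $T^*$ equals $\max\{\theta(A) \mid A \subseteq S^+\}$, as claimed.

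I expect no deep obstacle here; this is a short deduction from the already-established Theorem~\ref{theo:Klinz}. The only point requiring care is the single-set equivalence in the second step, where one must legitimately use both continuity (to know that $\theta(A)$ is actually achieved, not merely an infimum) and monotonicity (to extend the inequality to all larger $T$); omitting either property would leave a gap. One should also confirm that the finite maximum is well-defined and that the value $\max_A \theta(A)$ is itself feasible, which follows immediately once the ray characterization of feasible time horizons is in place.
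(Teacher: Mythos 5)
Your proposal is correct and follows exactly the paper's intended route: the paper derives Corollary~\ref{coro:T_theta} ``immediately'' from Theorem~\ref{theo:Klinz} together with the monotonicity and continuity of $T \mapsto o^T(A)$ noted just before the corollary, which is precisely the deduction you spell out. Your version merely makes explicit the single-set equivalence $o^T(A) \geq w(A) \iff T \geq \theta(A)$ and the finite-maximum step that the paper leaves implicit.
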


In the rest of this section, we give the properties of $o^T(A)$ and $\theta(A)$. 
For this purpose, we need to deal with a static flow network,
which corresponds to an input dynamic flow network $\mathcal{N}$.
A {\it static flow network} $\mathcal{\overline{N}}$ is a directed graph $D=(V,E)$ with a capacity $u(e)$ and a cost $c(e)$ for each edge $e \in E$.
On a static flow network $\overline{\mathcal{N}}$, 
a {\it static flow} $\bar{f}$ is defined as a function $\bar{f}: E \rightarrow \mathbb{R}_+$, where $\bar{f}(e)$ represents the amount of flow on edge $e$.
Given a source $s^+ \in V$ and a sink $s^- \in V$, 
a static flow $\bar{f}$ is said to be feasible if it holds that 
\begin{align}
    0 \leq \bar{f}(e) \leq u(e) & \text{ \hspace{20pt} for each }  e \in E, \label{eq:static_capacity_const}\\
    \sum_{e\in \delta^{+}(v)}\bar{f}(e)-\sum_{e\in \delta^{-}(v)}\bar{f}(e) =0 & \text{ \hspace{20pt} for each } v \in V \setminus\{s^+,s^-\}, \label{eq:static_conserve_const}\\
    \sum_{e\in \delta^{+}(s^-)}\bar{f}(e)-\sum_{e\in \delta^{-}(s^+)}\bar{f}(e) =0. & \label{eq:static_amount_const}
\end{align}
The {\it minimum-cost flow} %
is a feasible static flow %
$\bar{f}$ that minimizes $\sum_{e \in E} c(e) \bar{f}(e)$.
Given a static flow network $\overline{\mathcal{N}}$ and a feasible static flow %
$\bar{f}$ on $\overline{\mathcal{N}}$,
the residual network $\overline{\mathcal{N}}_{\bar{f}}$ is constructed as follows.
For every edge $e=(v_1,v_2) \in E$ such that $\bar{f}(e) > 0$, 
reduce its capacity to $u(e)-\bar{f}(e)$
and add new edge $e'=(v_2,v_1)$ of capacity $\bar{f}(e)$ and cost $-c(e)$.
In the following, given a dynamic flow network $\mathcal{N}$, 
$\overline{\mathcal{N}}$ means the static flow network comprising the same underlying graph as ${\mathcal{N}}$ with uniform edge capacity $u$ and cost $\tau(e)$ for every edge $e \in E$.

According to Anderson and Philpott~\cite{Anderson1994},
given $A\subseteq S^+\cup S^-$ and $T\in\mathbb{R}_+$, 
$o^T(A)$ can be obtained by applying the successively shortest path algorithm~\cite{Busacker1961,Iri1960,Jewell1958} for the minimum-cost flow problem in the following manner.

Initially, set $\bar{f}$ as a zero static flow, that is, $\bar{f}: E \rightarrow 0$.
At each step $i \ (\ge 1)$, execute the following two procedures.
\begin{itemize}
\item[(i)] 
Find the shortest (i.e., minimum cost) path $P^A_i$ from $A$ to $s^-$ in the current residual network $\overline{\mathcal{N}}_{\bar{f}}$.
If there is no such path, then break the iteration. 
\item[(ii)] 
Add to $\bar{f}$ a static flow of amount $u$ along path $P^A_i$. 
\end{itemize}

Let $p^A$ denote the number of paths obtained when the above iteration halts.
Algorithm~\ref{algo:successive_path} describes the above operations. 
\begin{algorithm}[bt]
\caption{{\sc SuccessiveShortestPath}}\label{algo:successive_path}
\begin{algorithmic}[1] 
	\REQUIRE $\overline{{\cal N}}$, $w$, $A$
	\ENSURE $p^A$, $(P^A_1, P^A_2, \ldots, P^A_{p^A})$
	\STATE $i \leftarrow 1$
	\STATE Let $\bar{f}$ be a zero static flow.
	\WHILE{\TRUE}
        \STATE Find the minimum cost path $P^A_i$ from $A$ to $s^-$ in the residual network $\overline{\mathcal{N}}_{\bar{f}}$.\\
        \COMMENT{If there are multiple paths, then choose $P^A_i$ by any tie-breaking rule to obtain the unique output.}
        \STATE If there is no such path, then break the while loop.
        \STATE Add to $\bar{f}$ a static flow of amount $u$ along path $P^A_i$.
        \STATE $i \leftarrow i + 1$
	\ENDWHILE
	\STATE $p^A \leftarrow i - 1$
	\RETURN $p^A$ and $(P^A_1, P^A_2, \ldots, P^A_{p^A})$.
\end{algorithmic}
\end{algorithm}

Then $o^T(A)$ is a piecewise linear function in $T$ (see Fig.~\ref{fig:oTA_T}) represented as
\begin{equation}\label{eq:oTA}
o^T(A)=\max_{h=1,\ldots,p^A}\left\{\sum^h_{i=1}(T-|P_i^A|)u, 0\right\},
\end{equation}
where $|P^A_i|$ denotes the sum of costs of the edges used in path $P^A_i$. %
\begin{figure}[t]
\centering
\includegraphics[width=60mm]{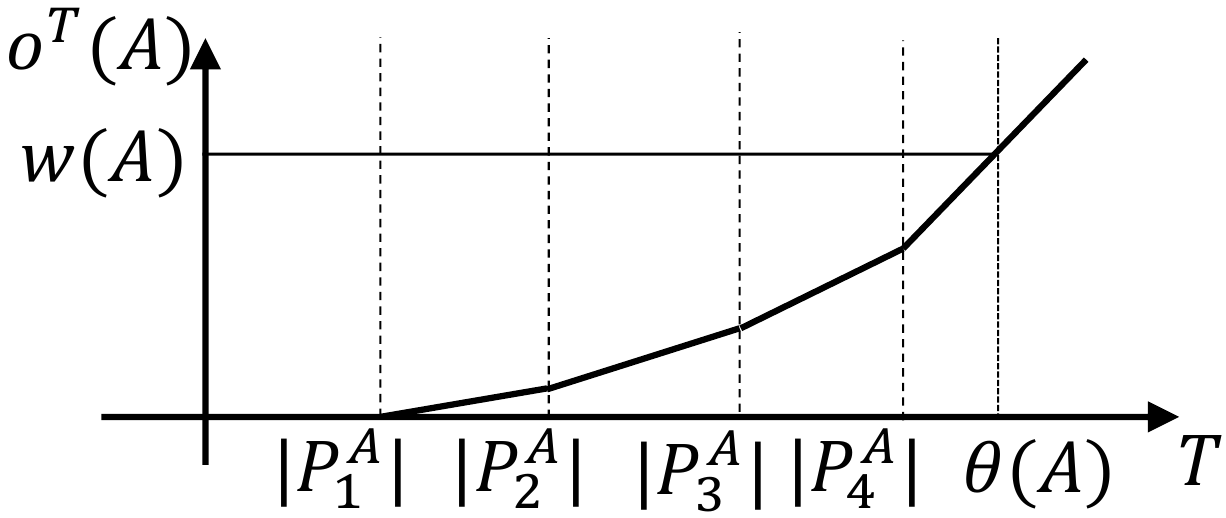}
\caption{$o^T(A)$ and the minimum required time $\theta(A)$ for $T$.
}\label{fig:oTA_T}
\end{figure}

As for $\theta(A)$,
because $o^T(A)$ is a nondecreasing continuous function in $T$ of range $[0, \infty)$ as shown in \eqref{eq:oTA},
$\theta(A)$ is $\theta$ satisfying $o^{\theta}(A)=w(A)$ by definition \eqref{eq:thetaA}.
We thus have
\begin{equation}\label{eq:min_need}
\theta(A)=\min_{h=1,\ldots,p^A} \left\{\frac{\sum^{h}_{i=1}|P^A_i|}{h}+\frac{w(A)}{hu}\right\}. 
\end{equation}

\section{ Computing the Minimum Feasible Time Horizon}\label{sec:cal_opt_T}

In this section, we propose an algorithm that computes the minimum feasible time horizon $T^*$ given the dynamic network ${\cal N}$ and a supply/demand function $w$. 
Throughout Sections~\ref{sec:cal_opt_T} and \ref{sec:cal_flow}, let $d$ denote the in-degree of the sink $s^-$.

Corollary~\ref{eq:min_T_max_theta} implies that 
the minimum feasible time horizon can be obtained by comparing $O(2^k)$ values of $\theta(A)$ for each $A \subseteq S^+$. 
In this section, we define a family of $O(k^d)$ subsets of $S^+$, denoted by $\hat{\mathcal{A}}$, 
and show that $\hat{\mathcal{A}}$ contains $A^* \subseteq S^+$ that maximizes $\theta(A)$. 

\subsection{Definition and Property of $\hat{\mathcal{A}}$}\label{sec:hatA}
We now consider a tuple of $p$ sources $(v_1,\ldots, v_p)\in (S^{+})^p$ with an integer $p \in \{ 1,\ldots,d \}$. 
We say that a source subset $A$ {\it admits} $(v_1,\ldots, v_p)$ if the following are satisfied:
(i) $p^A = p$; 
(ii) the origins of paths $P^A_1, \ldots, P^A_p$ are $v_1, \ldots, v_p$, respectively. 
Fig.~\ref{fig:exp_path_pA} shows the example of sources that admit given $(v_1,\ldots, v_p)$. 
We notice that for some $(v_1,\ldots, v_p)$, there may not exist any admitting $A \subseteq S^+$. 
 \begin{figure}[htbp]
   \begin{minipage}[b]{0.5\linewidth}
     \centering
     \includegraphics[width=50mm]{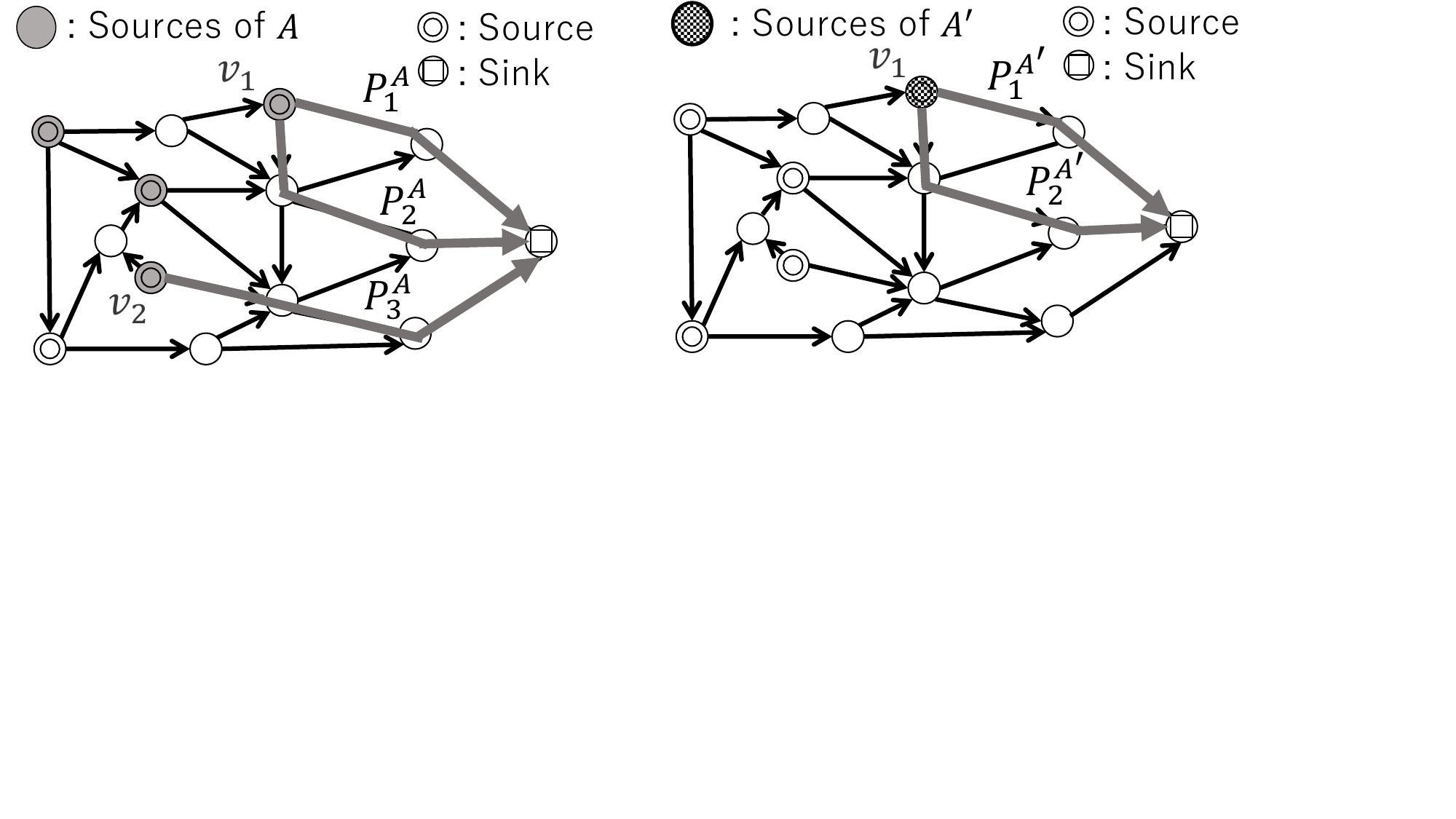}
     \subcaption{A source subset $A$ consisting of four sources (gray dots)}\label{fig:exp_path_pA_a}
   \end{minipage}
   \hspace{5pt}
   \begin{minipage}[b]{0.5\linewidth}
     \centering
     \includegraphics[width=50mm]{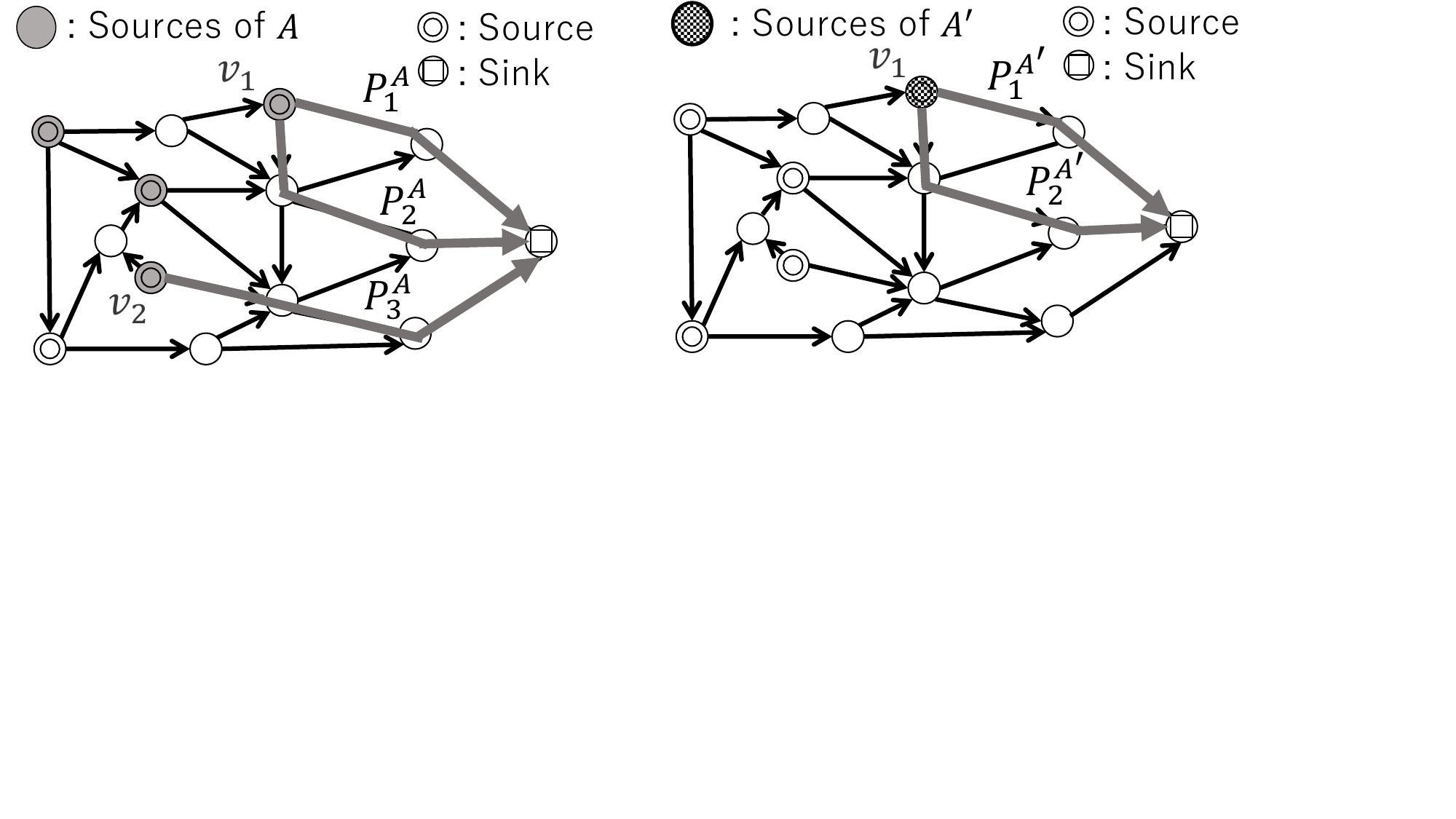}
     \subcaption{A source subset $A'$ consisting of one source (gray dot)}\label{fig:exp_path_pA_b}
  \end{minipage}
  \caption{Illustration of $P^A_i$ and $p^A$. In Fig.~\ref{fig:exp_path_pA}\subref{fig:exp_path_pA_a}, $p^A=3$ holds and $A$ admits $(v_1,v_1,v_2)$. In Fig.~\ref{fig:exp_path_pA}\subref{fig:exp_path_pA_b}, $p^{A^{'}}=2$ holds and $A^{'}$ admits $(v_1,v_1)$. }\label{fig:exp_path_pA}
\end{figure}

\begin{lemma}\label{lemm:bigger_A}
Let source subsets $A$, $A' \subseteq S^{+}$ admit the same $(v_1, \ldots, v_{p}) \in (S^{+})^p$, and let $w(A) \leq w(A')$ hold. 
Then $\theta(A) \leq \theta(A')$ holds.
\end{lemma}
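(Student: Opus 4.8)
The plan is to reduce the claim to the single fact that the two successive shortest path executions, from $A$ and from $A'$, produce paths of identical cost, i.e. $|P^A_i| = |P^{A'}_i|$ for every $i \in \{1, \ldots, p\}$. Note first that since both $A$ and $A'$ admit $(v_1, \ldots, v_p)$, we have $p^A = p^{A'} = p$ by part (i) of the definition, and the origins of the two path sequences agree by part (ii). Once the cost equality is in hand, the conclusion follows directly from the closed form \eqref{eq:min_need}: for each $h \in \{1, \ldots, p\}$ the quantity $\frac{1}{h}\sum_{i=1}^{h}|P^A_i| + \frac{w(A)}{hu}$ differs from $\frac{1}{h}\sum_{i=1}^{h}|P^{A'}_i| + \frac{w(A')}{hu}$ only in the last summand, and since $w(A) \le w(A')$ and $hu > 0$ the former is at most the latter. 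Taking the minimum over $h$ of both sides, and using that a pointwise inequality between two families is preserved under $\min$, yields $\theta(A) \le \theta(A')$.

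The core step is thus the cost equality, which I would establish by induction on $i$, maintaining the stronger invariant that after augmenting along the first $i$ paths the two residual networks coincide. At step $i$, assume the residual networks agree and call the common network $R_{i-1}$ (for $i=1$ this is $\bar{\mathcal{N}}$). By Algorithm~\ref{algo:successive_path}, $P^A_i$ is a minimum-cost path from the whole set $A$ to $s^-$ in $R_{i-1}$, and by the admitting hypothesis its origin is $v_i$; in particular $v_i \in A$, and symmetrically $v_i \in A'$, so $v_i \in A \cap A'$. Being itself a path from $v_i$ while having the minimum cost among all paths leaving $A$ (which contains $v_i$), the path $P^A_i$ is necessarily a shortest $v_i$-to-$s^-$ path in $R_{i-1}$; the same reasoning applied to $A'$ shows $P^{A'}_i$ is a shortest $v_i$-to-$s^-$ path in the very same network $R_{i-1}$. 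Since both executions use the common tie-breaking rule that makes the output unique, these two shortest $v_i$-$s^-$ paths coincide, giving $P^A_i = P^{A'}_i$ and hence $|P^A_i| = |P^{A'}_i|$; augmenting this identical path in the identical network $R_{i-1}$ keeps the residual networks equal, closing the induction.

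I expect the delicate point to be exactly this synchronization of the residual networks. The identity $|P^A_i| = |P^{A'}_i|$ cannot be argued one step at a time from cost information alone, because if the chosen paths differed while sharing the same cost, the residual networks — and therefore all later shortest-path distances — could diverge, so that a subsequent origin $v_j$ might see a different distance to $s^-$ in the two executions. The argument therefore hinges on upgrading ``the origins coincide'' to ``the whole paths coincide,'' which is where the fixed tie-breaking rule of Algorithm~\ref{algo:successive_path} is essential: it guarantees that the shortest $v_i$-$s^-$ path in $R_{i-1}$ selected in the two executions is literally the same edge sequence. With that in place, the remaining computation is the routine monotonicity argument of the first paragraph.
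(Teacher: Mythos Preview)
Your proof is correct and follows essentially the same route as the paper: first argue that the successive shortest paths coincide, $P^A_i = P^{A'}_i$ for all $i$, and then deduce $\theta(A)\le\theta(A')$. You supply a careful inductive justification for the path equality (invoking the fixed tie-breaking rule to synchronize the residual networks), whereas the paper simply asserts it ``by the behaviour of Algorithm~\ref{algo:successive_path}''; and you finish via the closed form~\eqref{eq:min_need}, while the paper instead concludes $o^\theta(A)=o^\theta(A')$ from~\eqref{eq:oTA} and then uses monotonicity of $o^\theta$ together with $w(A)\le w(A')$—both endings are equivalent.
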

\begin{proof}
By the behavior of Algorithm~\ref{algo:successive_path}, 
$P^{A}_i = P^{A'}_i$ holds for any $i \in \{1, \ldots, p\}$
because $A$ and $A'$ admit the same $(v_1, \ldots, v_{p})$. 
Then, \eqref{eq:oTA} implies that for any $\theta \in \mathbb{R}_+$, $o^{\theta}(A)=o^{\theta}(A')$ holds. 
Because $o^{\theta}(A)$ is nondecreasing on $\theta$ and $w(A) \leq w(A')$, we have $\theta(A)\leq \theta(A')$. 
\end{proof}

Corollary~\ref{coro:T_theta} and Lemma~\ref{lemm:bigger_A} imply that 
if source subsets $A, A' \subseteq S^{+}$ admit $(v_1, \ldots, v_{p}) \in (S^{+})^p$ and $w(A) < w(A')$ holds, 
then $\theta(A)$ must not be $T^*$. 
For each $(v_1,\ldots,v_p)\in (S^{+})^p$, we define $\hat{A}_{(v_1,\ldots,v_p)}$ as the subset of sources $A \subseteq S^+$ such that the %
cardinality of $A$ is the largest among all $A$ admitting $(v_1, \ldots, v_p)$:
\begin{equation}\label{eq:hat_A_v_1_v_p}
    \hat{A}_{(v_1,\ldots,v_p)} \coloneqq \argmax\{ |A| \mid A \subseteq S^+, A \text{ admits } (v_1, \ldots, v_p) \}.
\end{equation}
Note that if there is no subset that admits $(v_1,\ldots,v_p)$, then 
we define $\hat{A}_{(v_1,\ldots,v_p)}$ as the empty set.
Because $w(A)\geq 0$ holds for all source subsets $A$, $\hat{A}_{(v_1,\ldots,v_p)}$ is the source subset such that the amount of supply of $A$ is the largest among all $A$ admitting $(v_1, \ldots, v_p)$.

For each $p \in \{1, \ldots, d\}$, 
let $\hat{\mathcal{A}}_p$ denote the set of $\hat{A}_{(v_1,\ldots,v_p)}$ for all $(v_1,\ldots, v_p)\in (S^{+})^p$,
and let $\hat{\mathcal{A}}$ denote the union of all $\hat{\mathcal{A}}_p$:
\begin{equation}
    \hat{\mathcal{A}} \coloneqq \bigcup_{p \in \{ 1,\ldots,d \}} \hat{\mathcal{A}}_p, \text{ where } 
    \hat{\mathcal{A}}_p \coloneqq \{ \hat{A}_{(v_1,\ldots,v_p)} \mid (v_1,\ldots, v_p)\in (S^{+})^p \}.
\end{equation}
We see that $\hat{\mathcal{A}}$ consists of $O(k^d)$ subsets of $S^+$ (because $\sum_{p=1}^d k^p < 2k^d$ for $k > 1$). 

By the above argument, 
we have $T^*= \theta\left(\hat{A}_{(v_1,\ldots,v_p)}\right)$
for some $p \in \{1, \ldots, d\}$ and $(v_1,\ldots,v_p)\in (S^+)^p$.
We then have the following lemma.
\begin{lemma}\label{lemm:opt_T_A_hat}
For a dynamic flow network $\mathcal{N}$ and a supply/demand function $w$, 
it holds that 
\begin{equation}\label{eq:opt_T_A_hat}
T^*=\max\{\theta(A) \mid A \in \hat{\mathcal{A}} \}.
\end{equation}
\end{lemma}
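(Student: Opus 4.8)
The plan is to establish the two inequalities $\max\{\theta(A)\mid A\in\hat{\mathcal{A}}\}\le T^*$ and $\max\{\theta(A)\mid A\in\hat{\mathcal{A}}\}\ge T^*$ separately, and the first is essentially free. By construction every $A\in\hat{\mathcal{A}}$ is a subset of $S^+$, so Corollary~\ref{coro:T_theta} already gives $\theta(A)\le T^*$ for each such $A$; taking the maximum over $\hat{\mathcal{A}}$ yields $\max\{\theta(A)\mid A\in\hat{\mathcal{A}}\}\le T^*$.

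For the reverse inequality I would start from a witness $A^*\subseteq S^+$ attaining $\theta(A^*)=T^*$, which exists by Corollary~\ref{coro:T_theta}, and I may assume $A^*$ carries positive supply so that $p\coloneqq p^{A^*}\ge 1$. The first substantive step is to verify $p\le d$, so that the tuple of path origins has length within $\{1,\dots,d\}$ and the matching set indeed lies in $\hat{\mathcal{A}}$. This follows from a capacity count: each of the $p$ paths produced by Algorithm~\ref{algo:successive_path} pushes $u$ units of flow into $s^-$, so the associated static flow delivers $p\,u$ units to the sink, whereas the total capacity of the $d$ in-edges of $s^-$ is $d\,u$; hence $p\,u\le d\,u$, i.e.\ $p\le d$. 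Letting $(v_1,\dots,v_p)$ be the origins of $P^{A^*}_1,\dots,P^{A^*}_p$, the set $A^*$ admits $(v_1,\dots,v_p)$ with $p\in\{1,\dots,d\}$, so $\hat{A}_{(v_1,\dots,v_p)}$ is defined and belongs to $\hat{\mathcal{A}}$.

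The core comparison then pits $A^*$ against $\hat{A}_{(v_1,\dots,v_p)}$. Both admit the same tuple $(v_1,\dots,v_p)$, and since $\hat{A}_{(v_1,\dots,v_p)}$ has the largest supply among all sets admitting $(v_1,\dots,v_p)$, we get $w(A^*)\le w(\hat{A}_{(v_1,\dots,v_p)})$. Lemma~\ref{lemm:bigger_A} then yields $T^*=\theta(A^*)\le\theta(\hat{A}_{(v_1,\dots,v_p)})$; as this set lies in $\hat{\mathcal{A}}$, the maximum over $\hat{\mathcal{A}}$ is at least $T^*$. Combining the two inequalities gives the claimed equality.

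I expect the main obstacle to be the supply-maximality ingredient rather than the capacity count. The set $\hat{A}_{(v_1,\dots,v_p)}$ is defined as a maximizer of \emph{cardinality} among admitting sets, and converting this into a maximizer of \emph{supply} uses $w(v)>0$ for sources together with the structural fact that the family of sets admitting a fixed tuple has a unique inclusion-maximal element (so that maximum cardinality forces containment of all other admitting sets). Making that structural fact precise—ideally by showing the admitting family is closed under union, using that appending to $A^*$ any source that creates no new residual augmenting path to $s^-$ leaves the paths $P^{A^*}_1,\dots,P^{A^*}_p$ unchanged—is the delicate point on which the supply comparison, and hence the whole proof, ultimately rests.
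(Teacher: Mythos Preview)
Your proposal is correct and follows essentially the same route as the paper: both directions are handled exactly as you describe, with the reverse inequality obtained by passing from a maximizer $A^*$ to $\hat{A}_{(v_1,\dots,v_p)}$ via Lemma~\ref{lemm:bigger_A}. You are also right that the delicate point is converting cardinality-maximality into supply-maximality; the paper closes this gap with Lemma~\ref{lemm:union} (closure under union), precisely the structural fact you flagged.
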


\subsection{Algorithm for Finding the Minimum Feasible Time Horizon}

\begin{algorithm}[tb]
\caption{Obtain $\hat{\mathcal{A}}$}\label{algo:enu_A_hat}
\begin{algorithmic}[1] 
	\REQUIRE ${\cal N}, w$
	\ENSURE $\hat{\mathcal{A}}$
	\STATE Set $\hat{\mathcal{A}} \leftarrow \{ \emptyset \}$
		\FORALL {$p\in\{1, \ldots, d\}$}\label{line:loop_cal_A_hat}
			\FORALL {$(v_1,\ldots,v_p)\in (S^{+})^p$}\label{line:loop_cal_A_hat_2}
						\STATE Compute $\hat{A}_{(v_1,\ldots,v_p)}$ by calling Algorithm~\ref{algo:cal_A_hat} and add it to $\hat{\mathcal{A}}$ \label{line:cal_A_hat}
			\ENDFOR
		\ENDFOR\label{line:loop_cal_A_hat_end}
	\RETURN $\hat{\mathcal{A}}$
\end{algorithmic}
\end{algorithm}

Our algorithm for finding the minimum feasible time horizon involves the following two steps.
\begin{description}
\item[Step~1.] Enumerate all subsets in $\hat{\mathcal{A}}$ by Algorithm~\ref{algo:enu_A_hat}; that is,
compute $\hat{A}_{(v_1,\ldots,v_p)}$ for each $p \in \{1, \ldots, d\}, (v_1,\ldots,v_p)\in (S^+)^p$.
\item[Step~2.] For each $A \in \hat{\mathcal{A}}$, 
calculate $\theta(\hat{A}_{(v_1,\ldots,v_p)})$
and output the maximum value among them. 
\end{description}

To measure the running time for computing $\hat{A}_{(v_1,\ldots,v_p)}$, we have the following lemma.
See Section~\ref{subsec:hat_A} for the proof.
\begin{lemma}\label{lemm:comp_A_hat_time}
Given the dynamic network ${\cal N}$, a supply/demand function $w$, and $p\in\{1, \ldots, d\}$ and $(v_1,\ldots,v_p)\in (S^+)^p$, 
one can compute a subset $\hat{A} := \hat{A}_{(v_1,\ldots,v_p)}$
and paths $P^{\hat{A}}_1$, \ldots, $P^{\hat{A}}_{p^{\hat{A}}}$
in $O(n m d )$ time.
\end{lemma}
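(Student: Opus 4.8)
The plan is to reduce the computation of $\hat{A} := \hat{A}_{(v_1,\ldots,v_p)}$ to a single ``forced-origin'' run of the successive shortest path procedure followed by a linear scan over the sources. First I would compute the canonical paths: starting from the zero flow on $\bar{\mathcal{N}}$, for $i = 1, \ldots, p$ I find a shortest path $P_i$ from the prescribed source $v_i$ to $s^-$ in the current residual network $\mathcal{N}_{\bar f}$, push $u$ units along it, and record the distance labels $\mathrm{dist}_i(x)$ from every node $x$ to $s^-$ in that residual network. By the argument in the proof of Lemma~\ref{lemm:bigger_A}, any $A$ that admits $(v_1,\ldots,v_p)$ produces exactly these paths, so $P_i = P^{\hat A}_i$ and $p^{\hat A} = p$; hence the $P_i$ are well defined from the tuple alone and constitute the required output paths.

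It remains to identify $\hat A$ itself. I claim that, once the canonical run is fixed, a source $s \in S^+$ belongs to $\hat A$ exactly when adding it neither preempts one of the prescribed origins nor opens a new augmenting path. Concretely, $s \in \hat A$ iff (a) $\mathrm{dist}_i(s) \ge |P_i|$ for every $i \in \{1,\ldots,p\}$, so that under the fixed tie-breaking rule $v_i$ is still selected as the $i$-th origin rather than $s$, and (b) $\mathrm{dist}_{p+1}(s) = \infty$, i.e.\ $s$ cannot reach $s^-$ in the residual network after all $p$ pushes, so that $p^A$ stays equal to $p$. I would prove necessity and sufficiency as follows. If (a) fails at some step $i$, then in any superset containing $\{v_1,\ldots,v_p,s\}$ the $i$-th shortest path would originate at $s$, not $v_i$, so the tuple is not admitted; if (b) fails, an augmenting path from $s$ survives after step $p$, forcing $p^A > p$. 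Conversely, every $s$ satisfying (a) and (b) is harmless: since its distance label is at least $|P_i|$ at each step, the global minimum at step $i$ is still attained by $v_i$, so the sequence of paths, origins and the final residual network are unchanged even when all such sources are included simultaneously; and since each of them is disconnected from $s^-$ in the final residual, reachability of $s^-$ from their union is still empty and no $(p{+}1)$-st path can start. Taking the union of $\{v_1,\ldots,v_p\}$ with all sources satisfying (a) and (b) therefore yields a set that admits $(v_1,\ldots,v_p)$ and is inclusion-maximal, hence of maximum cardinality. Applying condition (a) to the origins themselves, i.e.\ checking $\mathrm{dist}_i(v_i) \le \mathrm{dist}_i(v_j)$ for all $i,j$, detects the case where no admitting set exists and $\hat A = \emptyset$.

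For the running time, the residual networks contain negative-cost reverse edges but, being residual networks of a minimum-cost flow, no negative cycle; so each shortest-path-tree computation can be carried out by Bellman--Ford in $O(nm)$ time, and a single run rooted at $s^-$ on the reversed residual graph simultaneously yields all labels $\mathrm{dist}_i(\cdot)$. Since $p \le d$, the $p$ forced pushes cost $O(nmp) = O(nmd)$ in total. Testing conditions (a) and (b) for every source costs $O(p)$ each, i.e.\ $O(kp) = O(kd)$ overall, which is dominated, and emitting the paths $P_1,\ldots,P_p$ of total length $O(np)$ is also dominated. This gives the claimed $O(nmd)$ bound.

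The main obstacle I expect is the correctness argument rather than the complexity: one must argue carefully that the forced-origin paths genuinely coincide with those produced by the free successive shortest path algorithm on the chosen set, and that including all compatible sources at once does not interact to alter any origin or create a new path. The delicate point throughout is the treatment of ties in the shortest-path selection, which must be resolved by the same fixed rule used in Algorithm~\ref{algo:successive_path} so that ``admitting $(v_1,\ldots,v_p)$'' remains a well-defined deterministic condition, and so that a tying source is classified as addable precisely when the rule still prefers the prescribed origin.
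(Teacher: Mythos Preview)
Your approach is essentially the paper's Algorithm~3: run the successive shortest path procedure with the prescribed origins $v_1,\ldots,v_p$, recording residual distances, then discard every source whose distance to $s^-$ beats $v_i$'s at step $i$ (your condition~(a), the paper's Line~6) or that is still connected to $s^-$ after step $p$ (your condition~(b), the paper's Line~10). The paper factors the correctness through two auxiliary lemmas --- Lemma~\ref{lemm:exist_A} (some $A$ admits $(v_1,\ldots,v_p)$ iff $\{v_1,\ldots,v_p\}$ does) and Lemma~\ref{lemm:union} (the admitting family is closed under union, so the maximal admitting set is unique) --- whereas you argue both directions directly, but the content and the running-time analysis are the same.

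There is one small gap in your non-existence test. You check only condition~(a) on the origins, namely that $v_i$ is (weakly) closest among the $v_j$ at step $i$; you must also check condition~(b) on them. After the $p$ forced pushes it can happen that some $v_j$ still reaches $s^-$ in the residual network, so $p^{\{v_1,\ldots,v_p\}}>p$ and $\{v_1,\ldots,v_p\}$ does not admit the tuple even though (a) holds throughout --- e.g.\ take $p=2$, $v_1=v_2$, with three edge-disjoint $v_1$--$s^-$ paths. In that case your ``union of $\{v_1,\ldots,v_p\}$ with all sources satisfying (a) and (b)'' contains a source violating~(b), and the claimed admittance fails. The fix is immediate: also verify (b) for each origin, or equivalently check $\{v_1,\ldots,v_p\}\subseteq\hat A$ before returning; this costs $O(d)$ and leaves the $O(nmd)$ bound intact.
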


Because $\hat{\mathcal{A}}$ consists of $O(k^d)$ subsets, 
Lemma~\ref{lemm:comp_A_hat_time} implies the following lemma
about the running time of Step~1.

\begin{lemma}\label{lemm:enu_A_hat}
Given the dynamic network ${\cal N}$ and a supply/demand function $w$, 
one can construct $\hat{\mathcal{A}}$ and 
paths $P^A_1, \ldots, P^A_{p^A}$ for $A \in \hat{\mathcal{A}}$
in $O(n m d k^{d})$ time.
\end{lemma}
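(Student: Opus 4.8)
The plan is to derive the bound as a direct product of the per-tuple cost from Lemma~\ref{lemm:comp_A_hat_time} and a count of how many tuples Algorithm~\ref{algo:enu_A_hat} processes. First I would recall that $\hat{\mathcal{A}} = \bigcup_{p=1}^d \hat{\mathcal{A}}_p$, where $\hat{\mathcal{A}}_p$ ranges over all tuples $(v_1,\ldots,v_p) \in (S^+)^p$. Since $|S^+| = k$, there are exactly $k^p$ tuples for each fixed $p$, so the two nested loops on lines~\ref{line:loop_cal_A_hat}--\ref{line:loop_cal_A_hat_end} execute the body $\sum_{p=1}^d k^p$ times; using $\sum_{p=1}^d k^p < 2k^d$ for $k > 1$ (the case $k=1$ being trivial), this is $O(k^d)$ iterations.

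Next I would account for the work per iteration. Each iteration makes one call to the subroutine of Lemma~\ref{lemm:comp_A_hat_time}, which by that lemma returns the subset $\hat{A}_{(v_1,\ldots,v_p)}$ together with its paths $P^{\hat{A}}_1, \ldots, P^{\hat{A}}_{p^{\hat{A}}}$ in $O(nmd)$ time. A key point is that the required paths are produced as a byproduct of this single call, so constructing them incurs no cost beyond what Lemma~\ref{lemm:comp_A_hat_time} already charges. Multiplying the per-iteration cost by the iteration count gives $O(k^d) \cdot O(nmd) = O(nmd k^d)$ for all calls combined.

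Finally I would confirm that the bookkeeping of assembling the collection $\hat{\mathcal{A}}$ and storing the paths does not dominate. The set $\hat{\mathcal{A}}$ holds at most $O(k^d)$ subsets (distinct tuples may collapse to the same subset, or to $\emptyset$ when no admitting subset exists), and each stored object is a subset of at most $k$ sources plus $O(d)$ paths of length $O(n)$, for a total overhead of $O(k^d(k + nd))$. Since $k \le n \le nmd$ and $nd \le nmd$ when $m, d \ge 1$, this overhead is absorbed into $O(nmd k^d)$, yielding the claimed total running time.

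There is no substantial obstacle here: the statement is essentially a counting argument layered on top of Lemma~\ref{lemm:comp_A_hat_time}. The only points demanding care are pinning down the exact tuple count so that the geometric sum $\sum_{p=1}^d k^p$ is correctly bounded by $O(k^d)$, and verifying that emitting the paths and maintaining $\hat{\mathcal{A}}$ stay within budget rather than adding a hidden factor.
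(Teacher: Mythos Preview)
Your proposal is correct and follows exactly the paper's own reasoning: the paper simply observes that $\hat{\mathcal{A}}$ consists of $O(k^d)$ subsets and multiplies by the $O(nmd)$ per-tuple cost from Lemma~\ref{lemm:comp_A_hat_time}. Your additional accounting for bookkeeping overhead is more careful than the paper's one-line justification, but the core argument is identical.
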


In Step~2, to compute $\theta(A)$ for each $A \in \mathcal{\hat{A}}$, we need only to compute $w(A)$
and values $(\sum^{h}_{i=1}|P^A_i|)/{h}+{w(A)}/(hu)$ for $h \in \{ 1,\ldots,p^A \}$ 
[by \eqref{eq:min_need}]
because we already have all $|P^A_i|~(i=1,\ldots,p)$ in Step~1 by Lemma~\ref{lemm:enu_A_hat}. 
Because we can compute $w(A)$ in $O(n)$ time and all values $\sum^{h}_{i=1}|P^A_i|$ in $O(d)$ time, 
Step~2 requires $O(n)$ time for each $A$ 
and $O(nk^d)$ time in total.
Summarizing the above, we have the following theorem. %

\begin{theorem}\label{theo:cal_min_T_time}
Given the dynamic network ${\cal N}$ whose edge capacities are uniform and a supply/demand function $w$, 
the minimum feasible time horizon can be computed in $O(n m d k^{d})$ time.
\end{theorem}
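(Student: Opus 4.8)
The plan is to prove correctness and the time bound separately, reusing the lemmas already established. For correctness, the essential ingredient is Lemma~\ref{lemm:opt_T_A_hat}, which gives $T^* = \max\{\theta(A) \mid A \in \hat{\mathcal{A}}\}$. So I would argue that the two-step algorithm (i) correctly enumerates $\hat{\mathcal{A}}$ and (ii) for each member evaluates $\theta(A)$ and returns the maximum. Part (i) is immediate: Step~1 is exactly Algorithm~\ref{algo:enu_A_hat}, whose output is $\hat{\mathcal{A}}$ by construction. Part (ii) follows because Step~2 computes $\theta(A)$ through formula~\eqref{eq:min_need} and takes the maximum, so the reported value equals $T^*$.

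For the running time I would bound the two steps in turn. Step~1 costs $O(nmdk^d)$ by Lemma~\ref{lemm:enu_A_hat}; the key point is that this lemma also delivers, for every $A \in \hat{\mathcal{A}}$, the paths $P^A_1,\ldots,P^A_{p^A}$ themselves, so their costs $|P^A_i|$ are on hand for Step~2 with no recomputation. For Step~2, I would fix $A \in \hat{\mathcal{A}}$ and note that $w(A)$ takes $O(n)$ time, while the prefix sums $\sum_{i=1}^h |P^A_i|$ over $h \in \{1,\ldots,p^A\}$ take only $O(d)$ time since $p^A \le d$; evaluating $\theta(A)$ via~\eqref{eq:min_need} is then a minimization over these $O(d)$ candidates. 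Hence each subset needs $O(n+d)=O(n)$ time, and summing over the $O(k^d)$ members of $\hat{\mathcal{A}}$ gives $O(nk^d)$ for Step~2.

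Adding the two contributions yields $O(nmdk^d) + O(nk^d) = O(nmdk^d)$, since the Step~2 term is absorbed into the Step~1 term. For this particular theorem essentially all the real work sits in the earlier lemmas — above all Lemma~\ref{lemm:comp_A_hat_time}, which provides the $O(nmd)$-time subroutine for a single tuple $(v_1,\ldots,v_p)$ — so the only point that needs care here is confirming that Step~1 already outputs the path data that Step~2 reuses; this is what keeps Step~2 from dominating and makes the final bound clean. I expect this reuse check to be the only non-routine step.
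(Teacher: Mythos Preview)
Your proposal is correct and mirrors the paper's argument almost exactly: the paper likewise invokes Lemma~\ref{lemm:enu_A_hat} for the $O(nmdk^d)$ cost of Step~1, then bounds Step~2 by $O(n)$ per subset (computing $w(A)$ in $O(n)$ and the prefix sums $\sum_{i=1}^h |P^A_i|$ in $O(d)$ using the path data already produced in Step~1), giving $O(nk^d)$ total and hence $O(nmdk^d)$ overall. Your emphasis on the reuse of path data from Step~1 is precisely the point the paper highlights as well.
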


\subsection{Algorithm for Computing $\hat{A}_{(v_1,\ldots,v_p)}$}\label{subsec:hat_A}

In this subsection, we give an algorithm for computing $\hat{A}_{(v_1,\ldots,v_p)}$ 
for each $(v_1, \ldots ,v_p)$, and the proof of Lemma~\ref{lemm:comp_A_hat_time}. 
Our algorithm (see Algorithm~\ref{algo:cal_A_hat}) involves two steps:
first, check whether there exists a source subset that admits $(v_1, \ldots, v_p)$
by applying Algorithm~\ref{algo:successive_path} to a specific input; 
if yes, then find $\hat{A}_{(v_1,\ldots,v_p)}$ with modified Algorithm~\ref{algo:successive_path}. 

\begin{algorithm}[tb]
\caption{Computing $\hat{A}_{(v_1,\ldots,v_p)}$}\label{algo:cal_A_hat}
\begin{algorithmic}[1] 
	\REQUIRE ${\cal N}$, $w$, $(v_1, \ldots ,v_p) \in (S^{+})^p$
	\ENSURE $\hat{A}_{(v_1,\ldots,v_p)}$
    \STATE Set $A' \leftarrow \{ v_1,\ldots, v_p \}$
    \STATE Check whether $A'$ admits $(v_1,\ldots, v_p)$ by calling Algorithm~\ref{algo:successive_path} with ${\cal N}$, $w$, $A'$\label{line:check_admit}
    \IF{$A'$ admits  $(v_1,\ldots, v_p)$}
    \STATE Set $\hat{A} \leftarrow S^+$ and $\bar{f} \leftarrow$ a zero static flow\label{line:start_hat}
	\FOR {$i=1$ to $p$}\label{line:loop_1_to_p_A_prime}
		\STATE Remove from $\hat{A}$ every source node from which $s^-$ is closer than from $v_i$ in $\overline{{\cal N}}_{\bar{f}}$\label{line:remove_in_loop}
		\STATE Compute $P^{A'}_i$ and $\bar{f}^{A'}_i$\label{line:get_path}
		\STATE Set $\bar{f} \leftarrow \bar{f} + \bar{f}^{A'}_i$
	\ENDFOR \label{line:endloop_1_to_p_A_prime}
	\STATE Remove from $\hat{A}$ every source node that can reach $s^-$ in $\overline{{\cal N}}_{\bar{f}}$ \label{line:remove_reachabel}
	\RETURN $\hat{A}$ and paths $|P^{A'}_1|$, $\ldots$, $|P^{A'}_p|$\label{line:output_A_dou_prime}\label{line:end_hat}
	\ELSE
        \RETURN $\emptyset$
	\ENDIF
\end{algorithmic}
\end{algorithm}

To check the existence of a source subset that admits $(v_1, \ldots ,v_p)$, the following lemma is useful. 
The proof is given later. 
\begin{lemma}\label{lemm:exist_A}
For each $p\in\{1, \ldots, d\}$ and $(v_1,\ldots,v_p)\in (S^+)^p$, 
there exists a source subset $A$ that admits $(v_1, \ldots,v_p)$ 
if and only if 
set $\{v_1,\ldots,v_p\}$ admits $(v_1, \ldots,v_p)$.
\end{lemma}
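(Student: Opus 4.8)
The plan is to prove the two directions separately, with essentially all the work going into the ``only if'' direction. The ``if'' direction is immediate: if the set $\{v_1,\ldots,v_p\}$ itself admits $(v_1,\ldots,v_p)$, then it is a source subset witnessing existence, so nothing needs to be shown.

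For the ``only if'' direction, suppose some $A \subseteq S^+$ admits $(v_1,\ldots,v_p)$. First I would observe that $B := \{v_1,\ldots,v_p\} \subseteq A$: by the definition of ``admits'', each path $P^A_i$ originates at $v_i$ and starts at a node of $A$, so $v_i \in A$. The goal is then to show that running Algorithm~\ref{algo:successive_path} on input $B$ reproduces exactly the same sequence of paths $P^A_1, \ldots, P^A_p$ and then halts, so that $p^B = p$ and the origins are $v_1,\ldots,v_p$; that is, $B$ admits $(v_1,\ldots,v_p)$.

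The core is an induction showing that the $A$-run and the $B$-run of Algorithm~\ref{algo:successive_path} stay in lockstep for the first $p$ iterations. Maintaining the invariant that after $i-1$ steps both runs have produced identical paths $P_1,\ldots,P_{i-1}$ --- and hence hold the identical static flow $\bar{f}$ and identical residual network $\bar{\mathcal{N}}_{\bar{f}}$ --- I would argue about step $i$ as follows. Every path from $B$ to $s^-$ in the common residual network is also a path from $A$ (since $B \subseteq A$), so the minimum cost over $B$-paths is at least the minimum over $A$-paths, namely $|P^A_i|$; conversely $P^A_i$ originates at $v_i \in B$, so it is itself a $B$-path, whence the minimum over $B$-paths is at most $|P^A_i|$. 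The two minima therefore coincide, and the set of minimum-cost $B$-paths is exactly the set of minimum-cost $A$-paths that originate in $B$, which contains $P^A_i$. Because the tie-breaking rule selects a path according to a fixed order that does not depend on the source set, and $P^A_i$ is the selected (hence order-minimal) path among all minimum-cost $A$-paths, it remains order-minimal among the subset of those lying in $B$; therefore the $B$-run also selects $P^B_i = P^A_i$, which closes the induction since both runs then add $u$ units of flow along the same path.

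Finally I would handle termination. Since $A$ admits $(v_1,\ldots,v_p)$ we have $p^A = p$, meaning that after $p$ steps there is no residual path from $A$ to $s^-$. As $B \subseteq A$, there is likewise no residual path from $B$ to $s^-$ in the identical residual network, so the $B$-run breaks at this point and $p^B = p$. Combined with the lockstep argument, this yields that $B$ admits $(v_1,\ldots,v_p)$. The step I expect to be the main obstacle is the tie-breaking bookkeeping in the inductive case: one must make explicit that the tie-breaking rule depends only on the candidate path and not on the source set, so that the $A$-optimal choice survives restriction to the $B$-candidates. Everything else follows from the monotonicity of the minimum-cost path value under shrinking the source set.
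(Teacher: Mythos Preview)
Your proof is correct and follows essentially the same inductive argument as the paper: show that the run of Algorithm~\ref{algo:successive_path} on $B=\{v_1,\ldots,v_p\}$ tracks the run on $A$ step by step using $B\subseteq A$, and then conclude $p^B=p$ from the absence of residual $A$--$s^-$ paths. Your handling of the tie-breaking rule is in fact more explicit than the paper's, which asserts $P^{A'}_j=P^A_j$ from the weaker inductive hypothesis that only the origins agree; your observation that $P^A_i$ is order-minimal in the larger candidate set and belongs to the smaller one is exactly the missing justification.
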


Fig.~\ref{fig:A_bar_A} shows 
an example of source subsets $A$ and $A' = \{v_1, \ldots ,v_p\}$ admitting $(v_1, \ldots ,v_p)$.
By Lemma~\ref{lemm:exist_A}, one can check whether there exists a source subset that admits $(v_1, \ldots, v_p)$ 
by checking that $A' = \{v_1, \ldots, v_p\}$ admits $(v_1, \ldots, v_p)$
at Line~\ref{line:check_admit} of Algorithm~\ref{algo:cal_A_hat} as follows:
run Algorithm~\ref{algo:successive_path} with input $A' = \{v_1, \ldots, v_p\}$; 
if the origins of paths $P^{A'}_1, \ldots, P^{A'}_{p^{A'}}$---which are the outputs of Algorithm~\ref{algo:successive_path}---are equivalent to $v_1, \ldots, v_p$, 
then we see that $A'$ admits $(v_1, \ldots, v_p)$, otherwise it does not. 

\begin{figure}[tb]
  \begin{minipage}[t]{0.5\linewidth}
    \centering
    \includegraphics[width=50mm]{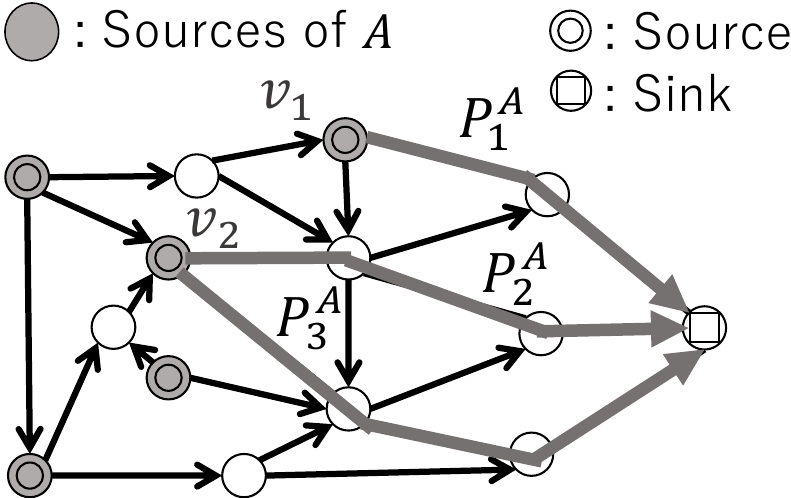}
    \subcaption{$A$ admits $(v_1,v_2,v_2)$}\label{fig:A_bar_a}
  \end{minipage}
  \hspace{5pt}
  \begin{minipage}[t]{0.5\linewidth}
    \centering
    \includegraphics[width=50mm]{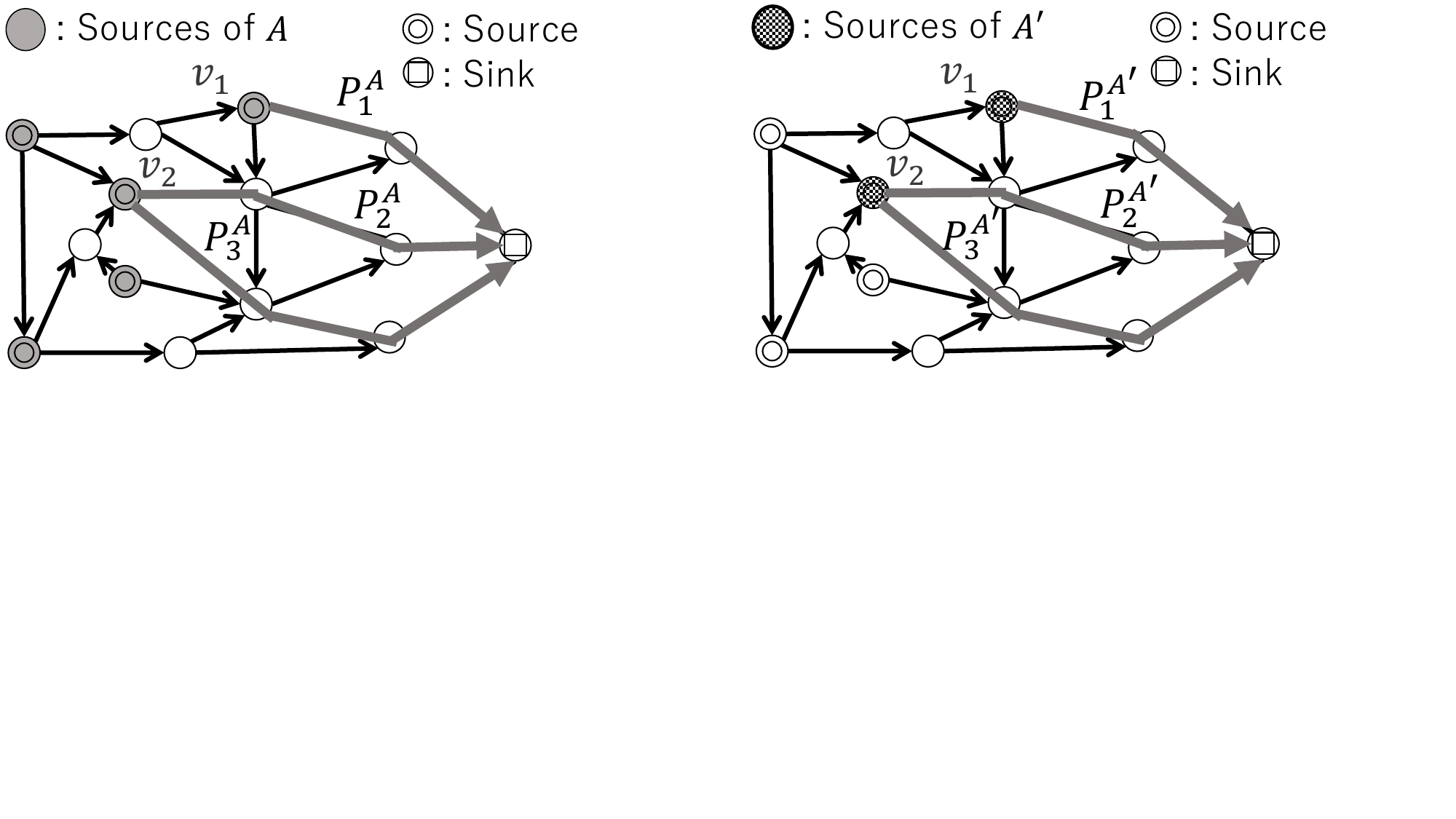}
    \subcaption{$A'=\{v_1,v_2\}$ admits $(v_1,v_2,v_2)$
    }\label{fig:A_bar_A_b}
  \end{minipage}
  \caption{Examples of the source subsets $A$ and $A'$ in Lemma~\ref{lemm:exist_A}. %
  }\label{fig:A_bar_A}
\end{figure}

Next, we describe operations to compute $\hat{A}_{(v_1,\ldots,v_p)}$ 
for given $(v_1,\ldots, v_p) \in (S^{+})^p$ such that $\{ v_1,\ldots, v_p \}$ admits $(v_1,\ldots, v_p)$. 
Before describing them, we give some notations and properties. 
We say that a source subset $A$ is a \textit{maximal} one admitting $(v_1, \ldots, v_p)$
if for any source $v \in S^+ \setminus A$, set $A \cup \{v\}$ does not admit $(v_1, \ldots, v_{p})$.
By the definition of $\hat{A}_{(v_1,\ldots,v_p)}$ in \eqref{eq:hat_A_v_1_v_p}, 
$\hat{A}_{(v_1,\ldots,v_p)}$ is a maximal source subset
admitting $(v_1, \ldots, v_p)$.
The following lemma implies that a maximal source subset admitting $(v_1, \ldots, v_p)$ is unique.
The proof is given later. 
\begin{lemma}\label{lemm:union}
Let $A_1$ and $A_2$ admit the same $(v_1, \ldots, v_p)$.
Then, $A_1 \cup A_2$ also admits $(v_1, \ldots, v_p)$.
\end{lemma}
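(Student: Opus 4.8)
The plan is to run Algorithm~\ref{algo:successive_path} simultaneously from the three source sets $A_1$, $A_2$, and $A_1\cup A_2$, and to prove by induction on the iteration index $i$ that all three runs augment along the \emph{same} path at every step, so that after each step the three residual networks coincide. Granting this, since $A_1$ admits $(v_1,\ldots,v_p)$ the common run halts after exactly $p$ steps with path origins $v_1,\ldots,v_p$; it then only remains to check that the run from $A_1\cup A_2$ halts at the same point, which yields $p^{A_1\cup A_2}=p$ together with origins $v_1,\ldots,v_p$, i.e. that $A_1\cup A_2$ admits $(v_1,\ldots,v_p)$. I would first note that both the relation ``admits'' and Algorithm~\ref{algo:successive_path} itself are independent of $w$, so only the cost/capacity structure of $\bar{\mathcal N}$ matters.

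For the inductive step, suppose the three residual networks agree after $i-1$ steps and call the common network $\mathcal{R}_{i-1}$. Because both $A_1$ and $A_2$ admit $(v_1,\ldots,v_p)$, the $i$-th path of each of these two runs originates at $v_i$; since $v_i$ belongs to both sets, the cost of this path equals the shortest-path distance $\ell$ from $v_i$ to $s^-$ in $\mathcal{R}_{i-1}$. Consequently the minimum cost out of $A_1$, the minimum cost out of $A_2$, and hence the minimum cost out of $A_1\cup A_2$ in $\mathcal{R}_{i-1}$ all equal $\ell$, and the set of minimum-cost paths from $A_1\cup A_2$ is exactly the union of the minimum-cost path sets from $A_1$ and from $A_2$. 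The goal of this step is to show that the path selected from the union is the same path $P_i$ selected from $A_1$ (and from $A_2$), so that all three push $P_i$ and the residual networks coincide again.

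The main obstacle is precisely the tie-breaking, since a priori the path chosen from $A_1$ and the path chosen from $A_2$ need only share their origin $v_i$, not their entire edge set. To handle this I would fix the tie-breaking rule of Algorithm~\ref{algo:successive_path} to be a single global total order $\prec$ on paths, with the selected path being the $\prec$-minimum among all minimum-cost paths from the current source set (this is also what the proof of Lemma~\ref{lemm:bigger_A} tacitly uses in asserting $P^A_i=P^{A'}_i$). Let $T$ be the set of \emph{all} shortest $v_i$--$s^-$ paths in $\mathcal{R}_{i-1}$, which depends only on $\mathcal{R}_{i-1}$ and $v_i$, not on the source set. Since $v_i\in A_1$, every path in $T$ is a minimum-cost path from $A_1$, so $T$ is contained in the candidate set for $A_1$; combined with the fact that the path selected from $A_1$ itself lies in $T$, a short monotonicity argument for $\prec$-minima over nested sets forces that selected path to equal $\min_{\prec}T$. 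The identical argument applied to $A_2$ shows both runs select $P_i:=\min_{\prec}T$, and the candidate set for $A_1\cup A_2$ is the union of two sets each having $\prec$-minimum $P_i$, so its $\prec$-minimum is again $P_i$; thus the union pushes $P_i$ as well.

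Finally, for termination I would use that reachability of $s^-$ from a set of sources is the union of the reachabilities from its individual elements. After the common $p$-th step, the run from $A_1$ and the run from $A_2$ have both halted, so no source of $A_1$ and no source of $A_2$ can reach $s^-$ in $\mathcal{R}_p$; hence no source of $A_1\cup A_2$ can reach $s^-$ either, and the run from $A_1\cup A_2$ also halts after exactly $p$ steps. This gives $p^{A_1\cup A_2}=p$ with path origins $v_1,\ldots,v_p$, so $A_1\cup A_2$ admits $(v_1,\ldots,v_p)$, completing the proof.
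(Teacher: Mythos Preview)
Your proposal is correct and follows essentially the same approach as the paper: an induction on the step index showing that the runs of Algorithm~\ref{algo:successive_path} from $A_1$, $A_2$, and $A_1\cup A_2$ coincide, using that $v_i$ realises the minimum distance to $s^-$ in the common residual network for every source in each of $A_1$ and $A_2$ (hence in their union), and concluding with the reachability argument for termination. The paper's proof is terser---it simply asserts $P^{A_1}_i=P^{A_2}_i$ ``by Algorithm~\ref{algo:successive_path}'' and records the distance inequality $|v_is^-|\le|vs^-|$ for all $v\in A_1\cup A_2$---whereas you spell out the tie-breaking argument via a global order $\prec$ and the nested-minimum reasoning; this extra care is justified (the paper relies implicitly on the fixed tie-breaking rule noted in the comment of Algorithm~\ref{algo:successive_path}) but does not constitute a different route.
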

By the uniqueness of the maximal source subset, 
it is enough to compute the maximal source subset admitting $(v_1,\ldots, v_p)$
to obtain $\hat{A}_{(v_1,\ldots,v_p)}$. 
See Lines~\ref{line:start_hat}--\ref{line:end_hat} of Algorithm~\ref{algo:cal_A_hat}.
Let us see which source node $\hat{A}_{(v_1,\ldots,v_p)}$ does not contain. 
Letting $A'$ be $ \{ v_1,\ldots, v_p \}$, for $i \in \{1,\ldots,p\}$, it holds that $P^{\hat{A}_{(v_1,\ldots,v_p)}}_i=P^{A'}_i$ and $\bar{f}^{\hat{A}_{(v_1,\ldots,v_p)}}_i=\bar{f}^{A'}_i$
because both $\hat{A}_{(v_1,\ldots,v_p)}$ and $A'$ admit $(v_1, \ldots ,v_p)$. 
Consider successively computing the minimum-cost paths $P^{A'}_1,\ldots,P^{A'}_p$ and 
corresponding static flows $\bar{f}^{A'}_1,\ldots,\bar{f}^{A'}_p$ in the aforementioned manner. 
Looking at the residual network $\overline{{\cal N}}_{\bar{f}}$ for $\bar{f} = \sum_{h=1}^{i-1} \bar{f}^{A'}_h$,
$\hat{A}_{(v_1,\ldots,v_p)}$ does not contain any source node from which $s^-$ is closer than from $v_i$ in $\overline{{\cal N}}_{\bar{f}}$;
otherwise, the origin of any minimum-cost path from $\hat{A}_{(v_1,\ldots,v_p)}$ to $s^-$ in $\overline{{\cal N}}_{\bar{f}}$ is never $v_i$, which is a contradiction.
At Line~\ref{line:remove_in_loop}, we remove such nodes from $\hat{A}$ which is a candidate for $\hat{A}_{(v_1,\ldots,v_p)}$. 
Furthermore, after taking all flows $\bar{f}^{A'}_1,\ldots,\bar{f}^{A'}_p$, there may remain source nodes that can reach $s^-$ in $\overline{{\cal N}}_{\bar{f}}$ for $\bar{f}=\sum_{h=1}^{p} \bar{f}^{A'}_h$ if $p < d$.
Then $\hat{A}_{(v_1,\ldots,v_p)}$ does not contain such nodes either.
At Line~\ref{line:remove_reachabel}, we remove such nodes and output the remaining source subset.
By the above operations, Algorithm~\ref{algo:cal_A_hat} computes a maximal subset admitting $(v_1, \ldots ,v_p)$, that is, $\hat{A}_{(v_1,\ldots,v_p)}$.

In the rest of this subsection, we prove Lemmas~\ref{lemm:comp_A_hat_time}, \ref{lemm:exist_A}, and \ref{lemm:union}.

\begin{proof}[Proof of Lemma~\ref{lemm:comp_A_hat_time}]
The correctness of Algorithm~\ref{algo:cal_A_hat} was discussed above, and here 
we analyze its running time. 

At Line~\ref{line:check_admit}, we call Algorithm~\ref{algo:successive_path}, which computes the shortest path from $A'$ to $s^-$ on a residual network calculated $d$ times.
We find the shortest path in $O(nm)$ time by the algorithm of Bellman and Ford~\cite{Bellman1958,Ford1962}.
Then, Line~\ref{line:check_admit} requires $O(nmd)$ time. 
At Line~\ref{line:get_path}, we also compute the shortest path from $A'$ to $s^-$ on a residual network.
Lines~\ref{line:loop_1_to_p_A_prime}--\ref{line:endloop_1_to_p_A_prime} are executed at most $d$ times and so require $O(nmd)$ time.
Lines~\ref{line:remove_reachabel} and~\ref{line:output_A_dou_prime} each require $O(n)$ time. 
In total, the running time of Algorithm~\ref{algo:cal_A_hat} is $O(n m d)$.
\end{proof}

\begin{proof}[Proof of Lemma~\ref{lemm:exist_A}]
Because the only-if part is obvious, we only need to prove the if part. 
Let us assume that a source subset $A$ admits $(v_1, \ldots ,v_p)$. 
By the behavior of Algorithm~\ref{algo:successive_path}, 
for any source $v \in A$ and any $i \in \{1,\ldots, p\}$, $|v_is^-| \leq |vs^-|$ holds in the residual network $\overline{{\cal N}}_{\bar{f}^A_{i-1}}$, 
where $|vs^-|$ denotes the minimum-cost flow from $v$ to $s^-$ in the static flow network.

Let $A'$ denote $\{v_1, \ldots, v_p\}$. We prove that $A'$ admits $(v_1, \ldots ,v_p)$ by induction. 
Because $A' \subseteq A$, $|v_1s^-| \leq |vs^-|$ holds for any source $v\in A'$ in the static network of $\overline{{\cal N}}$. 
Therefore, the origin of path $P_1^{A'}$ is $v_1$. 
Let us assume that for $i < p$, 
the origins of path $P_1^{A'}, \ldots, P_i^{A'}$
are $v_1, \ldots, v_{i}$, respectively. 
For each $j \in \{1, \ldots, i\}$, $P^{A'}_j=P^A_j$ holds, and thus we have $f^{A'}_{i} = f^A_{i}$. 
Because $A' \subseteq A$, $|v_{i+1}s^-|\leq |vs^-|$ holds for any source $v \in A'$ in the residual network $\overline{{\cal N}}_{\bar{f}^{A'}_{i}}$. 
Therefore, the origin of path $P_{i+1}^{A'}$ is $v_{i+1}$. 
Thus, the origins of paths $P_1^{A'}, \ldots, P_p^{A'}$ are $v_1, \ldots, v_{p}$, respectively. 
Moreover, for each $j \in \{1, \ldots, p\}$, $P^{A'}_j = P^A_j$ holds, and thus we have $\bar{f}^{A'}_{p}=\bar{f}^A_{p}$. 

The rest of the proof is to show that $p^{A'} = p$ holds. 
Because $A$ admits $(v_1, \ldots, v_p)$, there is no source in $A$ that can reach $s^-$ 
in the residual network $\overline{{\cal N}}_{\bar{f}^{A}_{p}}$. 
Because $\bar{f}^{A'}_p = f^A_p$ and $A' \subseteq A$ hold, 
there is no source in $A'$ that can reach $s^-$ in the residual network $\overline{{\cal N}}_{\bar{f}^{A'}_{p}}$. 
\end{proof}

\begin{proof}[Proof of Lemma~\ref{lemm:union}]
Because $A_1$ and $A_2$ admit the same $(v_1, \ldots, v_p)$, 
for each $i \in \{1,\ldots ,p\}$, $P^{A_1}_i=P^{A_2}_i$ and $\bar{f}^{A_1}_i=\bar{f}^{A_2}_i$ hold
by Algorithm~\ref{algo:successive_path}, which implies ${\cal \overline{N}}_{\bar{f}^{A_1}_{i}}={\cal \overline{N}}_{\bar{f}^{A_2}_{i}}$.
Because $A_1$ (resp.\ $A_2$) admits $(v_1, \ldots, v_p)$, 
for any $v \in A_1$ (resp.\ $A_2$) and for each $i \in \{1,\ldots, p\}$, 
$|v_is^-|\leq |vs^-|$ in residual network $\overline{\mathcal{N}}_{\bar{f}^{A_1}_{i-1}}(=\overline{\mathcal{N}}_{\bar{f}^{A_2}_{i-1}})$. 
Thus, for any $v \in A_1 \cup A_2$ and for each $i \in \{1,\ldots, p\}$, 
$|v_is^-|\leq |vs^-|$ in residual network $\overline{\mathcal{N}}_{\bar{f}^{A_1}_{i-1}}(=\overline{\mathcal{N}}_{\bar{f}^{A_2}_{i-1}})$.
Moreover, for any $v \in A_1 \cup A_2$, there is no path from $v$ to $s^-$ in $\overline{\mathcal{N}}_{\bar{f}^{A_1}_{p}}(=\overline{\mathcal{N}}_{\bar{f}^{A_2}_{p}})$
because $A_1$ and $A_2$ admit the same $(v_1, \ldots, v_p)$.
\end{proof}

\section{An Algorithm for Finding the Quickest Flow}\label{sec:cal_flow}
In this section, we propose an algorithm to find the quickest flow. 
Schl\"{o}ter and Skutella~\cite{Skutella2017} showed that a submodular function minimization algorithm relying on Cunningham's framework \cite{Cunningham1985} directly gives the quickest flow, and the current fastest algorithms~\cite{Kamiyama2019,Schloter2018} use this explicitly.
While our proposed algorithm does not execute any submodular function minimization explicitly,
it is based on the properties of the quickest flow as the solution for submodular function minimization, as shown by \cite{Skutella2017}.
In the following, we give basic definitions for submodular functions,
introduce the relationship between the quickest flow and submodular function minimization,
and then show our new algorithm for the quickest flow.

\subsection{Submodular Functions and Base Polytopes}

\begin{figure}[tb]
  \begin{minipage}[b]{0.5\linewidth}
    \centering
    \includegraphics[width=30mm]{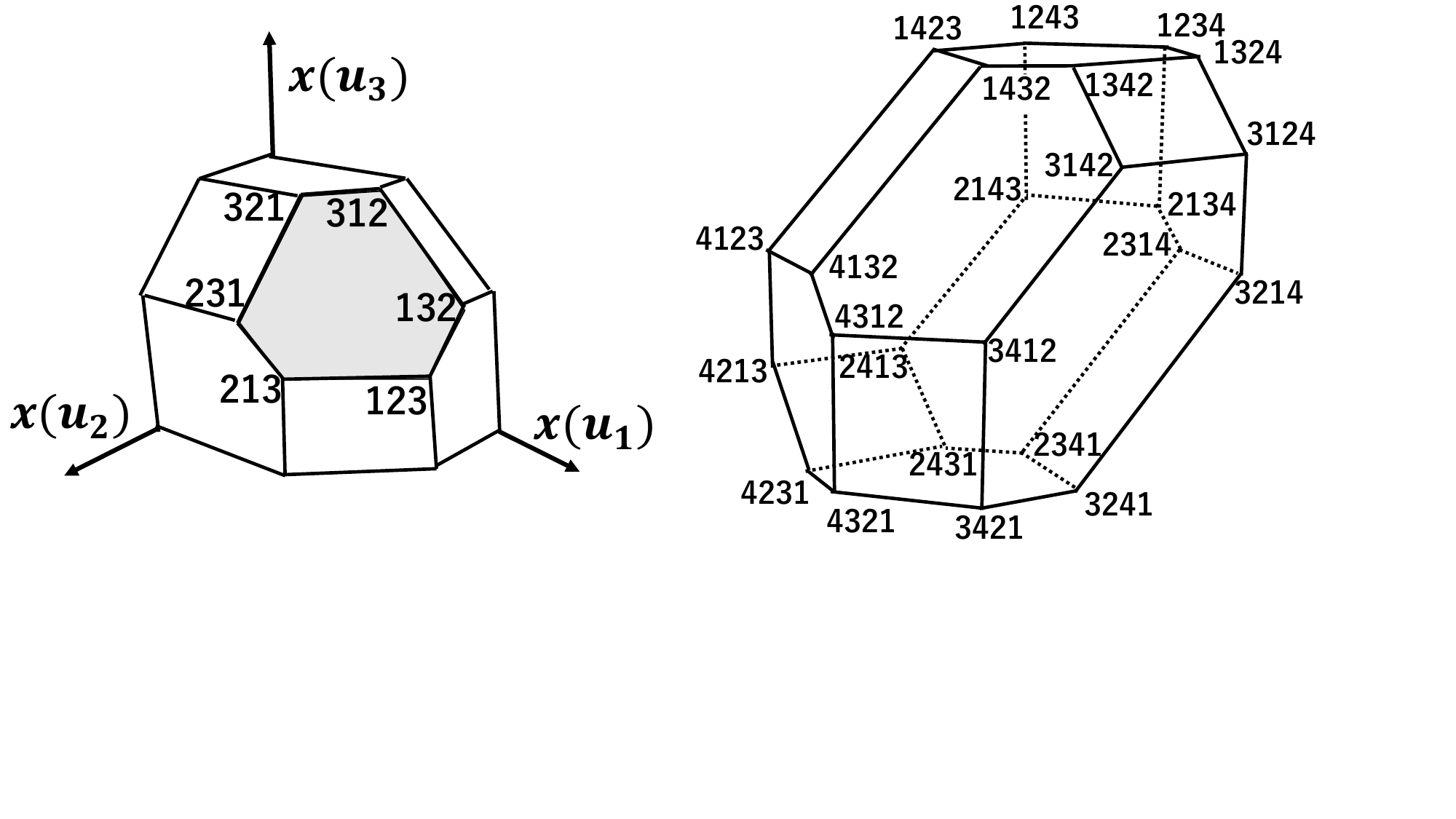}
    \subcaption{The gray face is a base polytope on $U=\{u_1,u_2, u_3\}$}\label{fig:base_poly_a}
  \end{minipage}
  \hspace{5pt}
  \begin{minipage}[b]{0.5\linewidth}
    \centering
    \includegraphics[width=30mm]{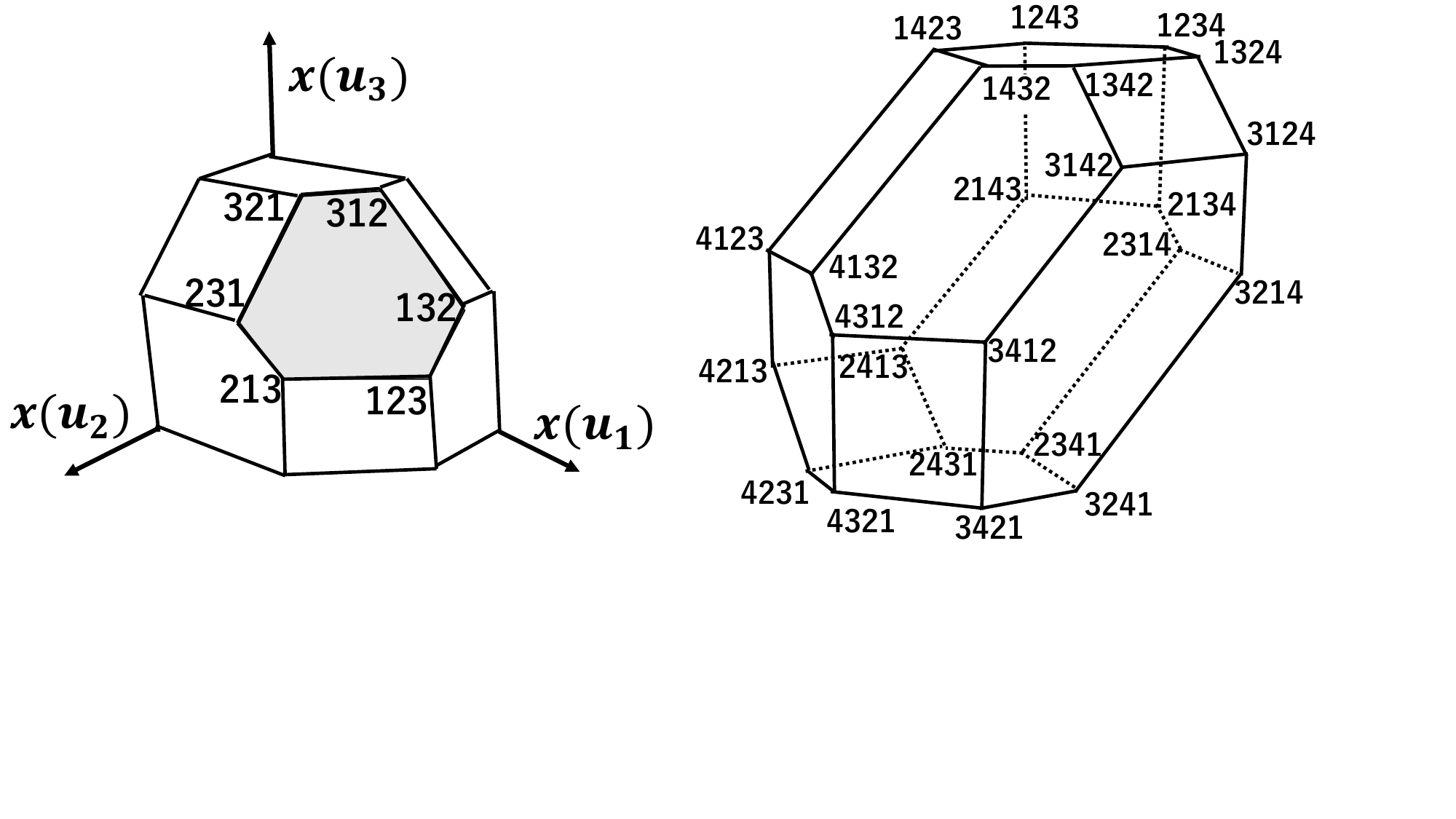}
    \subcaption{Example of base polytope on $U=\{u_1,u_2, u_3, u_4\}$ 
    }\label{fig:base_poly_b}
  \end{minipage}
  \caption{Base polytopes for submodular functions on $U$ with three or four elements. %
  }\label{fig:base_poly}
\end{figure}

Let $U$ be a nonempty finite set. A set function $g:2^U\rightarrow \mathbb{R}$ 
is said to be a \textit{submodular function} if it satisfies $g(X) + g(Y ) \geq g(X \cup Y ) + g(X \cap Y )$
for every pair of subsets $X, Y \subseteq U$. 
For a vector $x \in \mathbb{R}^U$ and a subset $A \subseteq U$,
we define $x(A) \coloneqq \sum_{s\in A} x(s)$. 
Given a submodular function $g$ defined on $U$, a \textit{base polytope} $B(g)$ is a convex polyhedron on $\mathbb{R}^U$ defined as
\begin{equation}\label{eq:base_polytope}
B(g) \coloneqq \left\{ x\in \mathbb{R}^U \ \Bigg| \ x(A)\leq g(A) \ \text{for all } A\subseteq U, \text{ and } x(U)= g(U) \right\}.
\end{equation}
Let us observe a property of the vertices of $B(g)$.
We can see that a vertex of $B(g)$ is the solution of simultaneous equations
consisting of $x(U)= g(U)$ and $x(A) = g(A)$ for other $|U|-1$ subsets $A \subset U$. 
Furthermore, each vertex of $B(g)$ corresponds to a total order on $U$ (see Fig.~\ref{fig:base_poly}). 
More precisely, given a total order $\prec \ = (u_{i_1}, \ldots, u_{i_{|U|}})$ on $U$, 
let $b^{(\prec,g)}$ denote
the solution $x$ of the following simultaneous equations:
\begin{equation}\label{eq:vertex_order_B}
x(\{u_{i_1}, \ldots, u_{i_l}\})=g(\{u_{i_1}, \ldots, u_{i_l}\}) \ \text{for each } l \in \{1,\ldots, |U|\}.
\end{equation}

\subsection{Relationship between Quickest Flows and Base Polytopes}
\label{sec:known_qf_bp}

Recall the definition of the function $o^T: S^+ \cup S^- \rightarrow \mathbb{R}$: 
for a subset $A \subseteq S^+ \cup S^-$, $o^T(A)$ represents the maximum amount of flow that can reach the sinks in $S^- \setminus A$ from the sources in $A$ within time horizon $T$. 
It is observed in~\cite{Hoppe2000} that 
$o^T$ is a submodular function for any time horizon $T\in \mathbb{R}_+$.
Note that in our case, $B(o^T)$ is defined on $\mathbb{R}^{k+1}$ because $|S^+|=k$ and $|S^-|=1$.
The following lemma is known for the relationship between the base polytope $B(o^T)$ and the feasibility of $({\cal N}, w, T)$. 
\begin{lemma}[\cite{Skutella2017}]\label{lemm:base_polytope_feasible}
Given a dynamic network ${\cal N}$, a supply/demand function $w$, and a time horizon $T$, 
there is a feasible flow w.r.t.\ $(w,T)$ on dynamic network $\mathcal{N}$ if and only if 
it holds that $w(s)_{s \in S^+ }\in B(o^T)$.
\end{lemma}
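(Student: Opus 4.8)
The plan is to establish the biconditional by connecting the characterization of feasibility in Theorem~\ref{theo:Klinz} with the definition of the base polytope $B(o^T)$. First I would write down explicitly what the membership $w|_{S^+} \in B(o^T)$ means via the defining inequalities~\eqref{eq:base_polytope}, and what Theorem~\ref{theo:Klinz} says, and then show these two conditions coincide. Here a key subtlety is that $B(o^T)$ is defined on the ground set $U = S^+ \cup S^-$ (so on $\mathbb{R}^{k+1}$), whereas $w$ restricted to sources lives naturally in $\mathbb{R}^{S^+} = \mathbb{R}^k$. So the statement ``$w(s)_{s \in S^+} \in B(o^T)$'' should be read as placing the source-coordinates of $w$ into the base polytope, and I must be careful about how the sink coordinate is handled.

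The forward direction would go as follows. Suppose $x := (w(s))_{s \in S^+}$ lies in $B(o^T)$. By definition of the base polytope, $x(A) \le o^T(A)$ for every $A \subseteq S^+$, which is exactly $w(A) \le o^T(A)$, i.e. $o^T(A) - w(A) \ge 0$ for all $A \subseteq S^+$. This is precisely condition~\eqref{eq:min_oTA_wA_0} of Theorem~\ref{theo:Klinz}, so a feasible flow w.r.t. $(w,T)$ exists. The reachability/value equality $x(U) = o^T(U)$ should correspond to the global conservation $\sum_{v} w(v) = 0$ together with the fact that $o^T(S^+ \cup S^-) = 0$ (there is nothing left to route once both sources and the sink are on the same side), so I would verify that the normalization constraint $x(U) = g(U)$ is automatically consistent with the supply/demand balance $w(S^+) = -w(s^-)$ rather than being an extra restriction.

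The converse direction is the more delicate one and I expect it to be the main obstacle. Given a feasible flow, Theorem~\ref{theo:Klinz} immediately yields $o^T(A) - w(A) \ge 0$ for all $A \subseteq S^+$, giving the base-polytope inequalities $x(A) \le o^T(A)$ for source subsets. The work is to (i) extend these inequalities to all subsets $A \subseteq U$ that also include the sink $s^-$, and (ii) establish the equality constraint $x(U) = o^T(U)$. For subsets containing $s^-$, I would use the interpretation of $o^T$ given in Section~\ref{sec:known_qf_bp}, namely that $o^T(A)$ measures maximum flow to sinks in $S^- \setminus A$; when $s^- \in A$ there is no sink outside $A$, so the relevant inequalities degenerate or follow from submodularity of $o^T$ (established in~\cite{Hoppe2000}) combined with the source-subset inequalities. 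The cleanest route is probably to invoke submodularity to propagate the $k$ source-subset inequalities to all $2^{k+1}$ subsets, and to derive $x(U) = o^T(U)$ from the conservation equality $w(V) = 0$. I would lean on Lemma~\ref{lemm:base_polytope_feasible}'s own hypotheses and the submodularity of $o^T$ to close this gap, since checking the full set of $2^{k+1}$ polytope constraints directly is exactly where a naive argument would stall.
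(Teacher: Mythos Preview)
The paper does not prove this lemma at all: it is stated with the citation \cite{Skutella2017} and taken as a known result from Schl\"{o}ter and Skutella. So there is no ``paper's own proof'' to compare against; your proposal is an independent derivation.

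That said, your route via Theorem~\ref{theo:Klinz} is sound, and in fact simpler than you fear. The forward direction is exactly right. For the converse, you worry about extending the inequalities $w(A)\le o^T(A)$ from source subsets $A\subseteq S^+$ to arbitrary $A\subseteq S^+\cup S^-$, and you propose invoking submodularity. This is unnecessary. When $s^-\in A$, the definition in Section~\ref{sec:known_qf_bp} gives $o^T(A)=0$ (there is no sink in $S^-\setminus A$ to route to), while $w(A)=w(A\cap S^+)+w(s^-)=w(A\cap S^+)-w(S^+)=-w(S^+\setminus A)\le 0$ because every source has positive supply. Hence $w(A)\le 0=o^T(A)$ automatically. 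The equality constraint $w(S^+\cup S^-)=o^T(S^+\cup S^-)$ is just $0=0$, again immediate from $\sum_{v}w(v)=0$. So the full set of base-polytope constraints reduces exactly to the source-subset inequalities, which Theorem~\ref{theo:Klinz} supplies; no appeal to submodularity is needed to close the gap you flagged.

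Your concern about the dimension mismatch (the statement writes $w(s)_{s\in S^+}$ but $B(o^T)\subseteq\mathbb{R}^{k+1}$) is a notational sloppiness in the paper, not a mathematical obstacle: since $w(s^-)=-w(S^+)$, the sink coordinate is determined, and the intended vector is $(w(s))_{s\in S^+\cup S^-}$.
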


To see the relationship between the quickest flow and the vertices of the base polytope, we introduce the concept of \textit{lexicographically maximal dynamic flow} (\textit{lex-max dynamic flow} for short).
Given a dynamic flow network $\mathcal{N}$, a total order $\prec$ on $S^+ \cup S^-$,
and a time horizon $T\in \mathbb{R}_+$,
the lex-max dynamic flow w.r.t.\ $(\prec, T)$, denoted by $f^{(\prec,T)}$, 
is the dynamic flow on $\mathcal{N}$ satisfying 
the capacity constraint~\eqref{eq:capacity_const} and 
the conserve constraint~\eqref{eq:conserve_const}
that maximizes the amount of flow leaving the sources in order of $\prec$ within time horizon $T$. 

Schl\"{o}ter and Skutella~\cite{Skutella2017} showed that
for any total order $\prec$ on $S^+ \cup S^-$,
the vertex $b^{(\prec,o^T)}$ of $B(o^T)$ corresponds to $f^{(\prec,T)}$; in other words, for any source/sink $s \in S^+ \cup S^-$, $b^{(\prec,o^T)}(s)$ is the amount of flow leaving $s$ by lex-max dynamic flow $f^{(\prec,T)}$.
Moreover, considering a supply/demand function $w$ as a vector in $\mathbb{R}^{k+1}$, 
there are some total orders $\prec_1, \ldots, \prec_h$ on $S^+ \cup S^-$
such that $w$ can be a convex combination of $b^{(\prec_1,o^{T^*})}, \ldots, b^{(\prec_h,o^{T^*})}$.
That is, there are some total orders $\prec_1, \ldots, \prec_h$ and positive real values $\lambda_1, \ldots, \lambda_h$ with $h\leq k$ such that 
\begin{equation}\label{eq:conv_comb_w}
    w = \sum^{h}_{i=1} \lambda_i b^{(\prec_i,o^{T^*})}.
\end{equation}
Recall that $T^*$ is the minimum feasible time horizon w.r.t.\ $w$.
Then the quickest flow $f^*$ on $\mathcal{N}$ w.r.t.\ $w$ satisfies
\begin{equation}\label{eq:conv_comb_f}
    f^*=\sum^{h}_{i=1} \lambda_i f^{(\prec_i,T^*)}.
\end{equation}

In the rest of this subsection, we give the property of the base polytope $B(o^{T^*})$ with respect to $\hat{\mathcal{A}}$.
Let $\hat{B}(o^{T^*})$ denote a convex polyhedron on $\mathbb{R}^{k+1}$ defined as
\begin{equation*}
\hat{B}(o^{T^*}) \coloneqq \left\{ x\in \mathbb{R}^{k+1} \ \Bigg| \ x(A)\leq o^{T^*}(A) \ \text{for all } A \in \hat{\mathcal{A}}, \text{ and } x(S^+ \cup S^-)= 0 \right\}.
\end{equation*}
By definition, it clearly holds that $B(o^{T^*}) \subseteq \hat{B}(o^{T^*})$.
Not only that, we can show that $B(o^{T^*}) = \hat{B}(o^{T^*})$ as follows.
Consider any vector $x' \in \mathbb{R}^{k+1}$ on any facet of $B(o^{T^*})$.
Then the minimum feasible time horizon w.r.t.\ $x'$ is also $T^*$.
As discussed in Section~\ref{sec:hatA}, $\hat{\mathcal{A}}$ does not depend on a supply/demand function,
thus by Lemma~\ref{lemm:opt_T_A_hat},
$\hat{\mathcal{A}}$ contains the subset of the sources $A'$ that maximizes $\theta(A)$ w.r.t.\ $x'$.
This means that the facet of $B(o^{T^*})$ containing $x'$ is determined by 
\begin{equation*}
    x(A')= o^{T^*}(A') \text{ and } x(S^+ \cup S^-)=0,
\end{equation*}
which is also a facet of $\hat{B}(o^{T^*})$.
We thus have the following lemma.
\begin{lemma}\label{lemm:eqvlt_bp}
For a dynamic flow network $\mathcal{N}$ and a supply/demand function $w$, 
let $T^*$ be the minimum feasible time horizon $T^*$ w.r.t.\ $(\mathcal{N},w)$.
Then it holds that $B(o^{T^*}) = \hat{B}(o^{T^*})$.
\end{lemma}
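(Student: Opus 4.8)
The plan is to prove the two inclusions separately. The inclusion $B(o^{T^*}) \subseteq \hat{B}(o^{T^*})$ is immediate, since $\hat{\mathcal{A}} \subseteq 2^{S^+} \subseteq 2^{S^+\cup S^-}$ means the constraints defining $\hat{B}(o^{T^*})$ form a subfamily of those defining $B(o^{T^*})$. For the reverse inclusion I would argue at the level of facets: it suffices to show that every facet-defining inequality of $B(o^{T^*})$ has the same supporting hyperplane as one of the constraints $x(A)\le o^{T^*}(A)$ with $A\in\hat{\mathcal{A}}$. Indeed, a polytope equals the intersection of its affine hull with the halfspaces of its facets, so once every facet of $B(o^{T^*})$ is cut out by an $\hat{\mathcal{A}}$-constraint, any point of $\hat{B}(o^{T^*})$ satisfies all the facet inequalities of $B(o^{T^*})$ together with $x(S^+\cup S^-)=0$, and hence lies in $B(o^{T^*})$.

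To capture a given facet $F$, I would pick a point $x'$ in the relative interior of $F$ and treat $x'$ as a pseudo-supply vector, writing $\theta_{x'}(A)$ for the minimum required time computed with $x'$ in place of $w$. Two structural facts about $x'$ drive the argument. First, since $o^{T^*}$ is monotone nondecreasing on subsets of $S^+$, every vertex $b^{(\prec,o^{T^*})}$ of $B(o^{T^*})$ has nonnegative coordinates on the sources (its source-entries are marginals of a monotone function) and a nonpositive coordinate on $s^-$; by convexity $x'$ inherits $x'(s)\ge 0$ for $s\in S^+$ and $x'(s^-)\le 0$, while $x'(S^+\cup S^-)=0$. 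This sign structure makes all base-polytope constraints indexed by sets containing $s^-$ slack except on coordinate hyperplanes, so the minimum feasible time horizon of $x'$ is governed exactly by $\max_{A\subseteq S^+}\theta_{x'}(A)$, as in Corollary~\ref{coro:T_theta}. Second, since $x'\in B(o^{T^*})$, this minimum horizon is at most $T^*$, and since some constraint is tight at the boundary point $x'$, it is at least $T^*$; hence it equals $T^*$.

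With $x'$ nonnegative on the sources, the derivation of Lemma~\ref{lemm:opt_T_A_hat} goes through verbatim for $x'$ (it uses only the monotonicity of $o$ and the nonnegativity of supplies in Lemma~\ref{lemm:bigger_A}), so there is $A'\in\hat{\mathcal{A}}$ with $\theta_{x'}(A')=T^*$, i.e. $x'(A')=o^{T^*}(A')$. Thus the valid inequality $x(A')\le o^{T^*}(A')$ is tight at $x'$. Since $T^*>0$ forces $A'\neq\emptyset$ and $A'\subseteq S^+$ forces $A'\neq S^+\cup S^-$, this inequality is nontrivial, so the face it induces on $B(o^{T^*})$ is proper and contains the relative-interior point $x'$ of $F$; consequently it contains $F$ and, being proper, must equal the facet $F$. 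Hence $F$ is cut out by the $\hat{\mathcal{A}}$-constraint for $A'$, which completes the facet matching and therefore the proof.

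The step I expect to be the main obstacle is establishing that the relative-interior point $x'$ of \emph{every} facet really has minimum feasible time horizon equal to $T^*$ and that the recovered $A'$ is tight there, that is, that each facet of $B(o^{T^*})$ is ``seen'' by some upper-bound constraint on a subset of $S^+$. The sign analysis of $x'$ and the reduction to $\max_{A\subseteq S^+}\theta_{x'}(A)$ is precisely what must rule out the degenerate possibility that $x'$ lies only on a coordinate-type facet at which no such constraint is active; making this reduction airtight, rather than appealing loosely to ``$\hat{\mathcal{A}}$ is independent of $w$,'' is the delicate part of the argument.
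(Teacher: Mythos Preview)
Your approach is essentially the same as the paper's: both argue the easy inclusion by definition and the reverse inclusion by showing that every facet of $B(o^{T^*})$ is supported by some constraint $x(A)\le o^{T^*}(A)$ with $A\in\hat{\mathcal{A}}$, via the observation that any point $x'$ on such a facet has minimum feasible horizon $T^*$ and that Lemma~\ref{lemm:opt_T_A_hat} (whose family $\hat{\mathcal{A}}$ is independent of the supply vector) then furnishes the required $A'$. You are in fact more careful than the paper about the sign structure of $x'$ and about why Lemma~\ref{lemm:opt_T_A_hat} transfers to $x'$; the obstacle you flag in your last paragraph is exactly the step the paper dispatches in one sentence by simply asserting that the minimum feasible time horizon w.r.t.\ $x'$ is $T^*$.
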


\subsection{Algorithms}
From Section~\ref{sec:known_qf_bp}, our main task is to determine total orders $\prec_1, \ldots, \prec_h$ on $S^+ \cup S^-$ and positive real values $\lambda_1, \ldots, \lambda_h$ satisfying \eqref{eq:conv_comb_w}.
By Lemma~\ref{lemm:eqvlt_bp}, we can consider $\hat{B}(o^{T^*})$ instead of $B(o^{T^*})$.
For a dynamic flow network $\mathcal{N}$ and a supply/demand function $w$,
let $A^*$ denote the subset that maximizes $\theta(A)$, meaning that $T^* = \theta(A^*)$ holds.

The algorithm inductively computes $\prec_i,f^{(\prec_i,{T^*})},b^{(\prec_i,o^{T^*})},\lambda_i$ for $i = 1,\ldots,h$ in this order.
Below, we detail the first iteration; then, in the next paragraph we detail the $i$th iteration. %

Let $x'_1= w \in \mathbb{R}^{k+1}$ and $A_1 = A^*$.
Recall that for $x = x'_1$, it holds that
\begin{equation}\label{eq:triv_eqs}
    x(A_1)= o^{T^*}(A_1) \text{ and } x(S^+ \cup S^-)= o^{T^*}(S^+ \cup S^-)=0.
\end{equation}
In other words, $x'_1$ is located on the facet of $\hat{B}(o^{T^*})$ determined by the above two equations in \eqref{eq:triv_eqs}, denoted by $B_1$.
This implies that all the vertices $b^{(\prec_1,o^{T^*})},\ldots,b^{(\prec_h,o^{T^*})}$ satisfying \eqref{eq:conv_comb_w} are located around $B_1$.
We thus arbitrarily choose a total order $\prec_1 \ = (s_{i_{1,1}},\ldots,s_{i_{1,k+1}})$ on $S^+ \cup S^-$ so that $b^{(\prec_1,o^{T^*})}$ is one of the vertices of $B_1$, that is, by~\eqref{eq:vertex_order_B},
$\{ s_{i_{1,1}},\ldots,s_{i_{1,|A^*|}} \}$ and $\{ s_{i_{1,|A^*|+1}},\ldots,s_{i_{1,k+1}} \}$ coincide with $|A^*|$ and $(S^+ \cup S^-) \setminus |A^*|$, respectively.
We then compute the lex-max dynamic flow $f^{(\prec_1,T^*)}$ using an algorithm by Hoppe and Tardos~\cite{Hoppe2000},
which immediately gives $b^{(\prec_1,o^{T^*})}$ as mentioned in Section~\ref{sec:known_qf_bp}.
Next, to find the vertex $b^{(\prec_2,o^{T^*})}$, 
we determine the facet of $B_1$ denoted by $B_2$ with which the half line from $b^{(\prec_1,o^{T^*})}$ to $x'_1$ in $\mathbb{R}^{k+1}$ intersects. %
Recall that $B_2$ is determined by the two equations in \eqref{eq:triv_eqs} and one more equation $x(A')= o^{T^*}(A')$ for some $A' \in \hat{\mathcal{A}}\setminus\{A_1,S^+ \cup S^-\}$.
We thus calculate the intersection points on all $O(|\hat{\mathcal{A}}|)$ candidate hyperplanes with the half line, and choose one that minimizes the Euclidean distance from $b^{(\prec_1,o^{T^*})}$ to the intersection point.
Let $A_2$  be an element of $\hat{\mathcal{A}}$ corresponding to the chosen hyperplane, and let $x'_2$ be the intersection point on $B_2$.
Once we obtain $x'_2$, we can have $x'_1$ as a convex combination of $b^{(\prec_1,o^{T^*})}$ and $x'_2$ and then employ the coefficient of $b^{(\prec_1,o^{T^*})}$ as $\lambda_1$.
Subsequently, we will represent $x'_2$ as a convex combination of vertices around $B_2$ in an inductive manner.

After the $(i-1)$-th iteration, we can determine $b^{(\prec_{i},o^{T^*})}$, $B_{i}$, $A_1,\ldots,A_i$, and $x'_{i}$ in a similar way. 
To find the vertex $b^{(\prec_{i+1},o^{T^*})}$, 
we determine the facet of $B_{i}$ denoted by $B_{i+1}$ with which the half line from $b^{(\prec_{i},o^{T^*})}$ to $x'_{i}$ in $\mathbb{R}^{k+1}$ intersects.
Recall that $B_{i+1}$ is determined by 
\begin{equation}\label{eq:ith_ite}
x(A_h)=o^{T^*}(A_h)\ \text{for}\ h=\{1,\ldots, i\}\ \text{and}\  x(S^+\cup S^-)=0,
\end{equation}
and one more equation $x(A')= o^{T^*}(A')$ for some $A' \in \hat{\mathcal{A}}\setminus\{A_1,\ldots, A_{i},S^+ \cup S^-\}$.
We thus calculate the intersection points on all $O(|\hat{\mathcal{A}}|)$ candidate hyperplanes
with the half line, and choose one that minimizes the Euclidean distance from $b^{(\prec_i,o^{T^*})}$ to the intersection point.
Let $A_{i+1}$ be an element of $\hat{\mathcal{A}}$ corresponding to the chosen hyperplane, and let $x'_{i+1}$ be the intersection point on $B_{i+1}$.
By arbitrarily choosing a total order $\prec_{i+1} \ = (s_{I_{i,1}},\ldots,s_{I_{i,k+1}})$ on $S^+ \cup S^-$ satisfying $\{ s_{I_{i,1}},\ldots,s_{I_{i,|A_h|}} \}=A_h$ for $h=\{1,\ldots, i+1\}$, respectively, 
we can obtain $b^{(\prec_{i+1},o^{T^*})}$, which is one of the vertices of $B_{i+1}$.

The difference between the dimensions of $B_{i}$ and $B_{i+1}$ is at least one. Therefore, the iteration stops after $k$ times at most. 
Algorithm~\ref{algo:find_QF} gives a formal description. %

\begin{algorithm}[tb]
\caption{Find the quickest flow $f^*$}\label{algo:find_QF}
\begin{algorithmic}[1] 
	\REQUIRE ${\cal N}, w, T^*, \hat{\mathcal{A}},(P^A_1,\ldots,P^A_{p^A})$ for all $A\in \hat{\mathcal{A}}, A^*$
	\ENSURE $f^*$
	\STATE Set $i \leftarrow 1$, $x'_1 \leftarrow w$, $\alpha_1 \leftarrow 1$, and $\bar{A}_1 \leftarrow A^*$
	\WHILE {$x'_i$ is not a vertex of $\hat{B}(o^{T^*})$}\label{line:loop_x_not_vertex}
        \STATE Arbitrarily choose a total order $\prec_i$ on $S^+ \cup S^-$ so that the first $|\bar{A}_h|$ nodes coincide with the elements of $\bar{A}_h$ for all $h \in \{1,\ldots,i\}$
		\STATE Compute $f^{(\prec_i,{T^*})}$ and calculate $b^{(\prec_i,o^{T^*})}$
		\STATE Calculate the intersection points on hyperplanes $x(A)= o^{T^*}(A)$ for all $A \in \hat{\mathcal{A}}$ with the half line from $b^{(\prec_i,o^{T^*})}$ to $x'_i$, 
		set $A_{i+1}$ as the one that minimizes the distance from $b^{(\prec_i,o^{T^*})}$ to the intersection point, and
		set $x'_{i+1}$ as that intersection point
		\STATE Calculate positive real values $\beta_i$ and $\gamma_i$ satisfying $x'_i = \beta_i b^{(\prec_i,o^{T^*})} + \gamma_i x'_{i+1}$, and set $\lambda_i \leftarrow \alpha_i \beta_i$ and $\alpha_{i+1} \leftarrow \alpha_i \gamma_i$
		\STATE Search $j \in \{0,\ldots,i\}$ such that $\bar{A}_j \subseteq A_{i+1} \subseteq \bar{A}_{j+1}$ (where $\bar{A}_0=\emptyset$ and $\bar{A}_{i+1}=S^+ \cup S^-$), and set $\bar{A}_{j+1} \leftarrow A_{i+1}$ and $\bar{A}_{h+1} \leftarrow \bar{A}_{h}$ for $h \in \{j+1,\ldots,i\}$
		\STATE Set $i \leftarrow i+1$
	\ENDWHILE\label{line:loop_x_not_vertex_end}
	\RETURN  $\lambda_1 f^{(\prec_1,{T^*})}+ \cdots +\lambda_i f^{(\prec_i,{T^*})}$\label{line:output_f}
\end{algorithmic}
\end{algorithm}

Here, we discuss the running time of Algorithm~\ref{algo:find_QF}.
Line~3 takes $O(k^2)$ time.
Using an algorithm by Hoppe and Tardos~\cite{Hoppe2000} for computing a lex-max dynamic flow (together with Orlin's min-cost flow algorithm~\cite{Orlin1993} and Thorup's shortest path algorithm~\cite{Thorup2004}), 
Line~4 takes $O(mk(m + n \log \log n)\log n)$ time.
Line~5 takes $O(k|\hat{\mathcal{A}}|)=O(k^{d+1})$ time.
Line~6 takes $O(1)$ time if we reuse the results calculated in Line~5.
Line~7 takes $O(k)$ time.
We repeat Lines~3--8 at most $k$ times. 
Combining the above argument and Theorem~\ref{theo:cal_min_T_time}, we have the following main theorem, which 
repeats Theorem~\ref{theo:cal_min_flow} but with logarithmic factors.
\begin{theorem}\label{theo:cal_min_flow_2}
Given a dynamic flow network ${\cal N}$ with uniform edge capacities and a supply/demand function $w$, 
the evacuation problem can be solved in $O(m n d k^{d} + mk^2 (m + n \log \log n)\log n)$ time.
\end{theorem}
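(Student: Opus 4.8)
The plan is to split the running time into two phases and to argue correctness and complexity of each separately. Phase~1 is the computation of the minimum feasible time horizon $T^*$, together with the family $\hat{\mathcal{A}}$ and, for every $A\in\hat{\mathcal{A}}$, the shortest paths $P^A_1,\ldots,P^A_{p^A}$; by Theorem~\ref{theo:cal_min_T_time} and Lemma~\ref{lemm:enu_A_hat} this costs $O(mndk^d)$ time and produces exactly the input demanded by Algorithm~\ref{algo:find_QF}. Phase~2 runs Algorithm~\ref{algo:find_QF} to assemble the quickest flow $f^*$. Thus the theorem follows once I establish (a) the correctness of Algorithm~\ref{algo:find_QF} and (b) that its running time is $O(mk^2(m+n\log\log n)\log n)$.

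For correctness I would invoke the structural results of Section~\ref{sec:known_qf_bp}: by \eqref{eq:conv_comb_w}--\eqref{eq:conv_comb_f} it suffices to produce total orders $\prec_1,\ldots,\prec_h$ with $h\le k$ and coefficients $\lambda_i>0$ such that $w=\sum_i\lambda_i b^{(\prec_i,o^{T^*})}$, since then $f^*=\sum_i\lambda_i f^{(\prec_i,T^*)}$. By Lemma~\ref{lemm:eqvlt_bp} I may replace $B(o^{T^*})$ by $\hat{B}(o^{T^*})$, whose facets are indexed by the $O(k^d)$ sets of $\hat{\mathcal{A}}$ rather than by all $2^k$ subsets of $S^+\cup S^-$. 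I would then show that the loop maintains the invariant that $x'_i$ lies on the face $B_i$ of $\hat{B}(o^{T^*})$ cut out by the chain of equalities \eqref{eq:ith_ite}, that the order $\prec_i$ realizes a vertex $b^{(\prec_i,o^{T^*})}$ of $B_i$ through \eqref{eq:vertex_order_B}, and that shooting the ray from this vertex through $x'_i$ meets the nearest bounding hyperplane $x(A_{i+1})=o^{T^*}(A_{i+1})$ at a point $x'_{i+1}$ on a face $B_{i+1}\subsetneq B_i$ of strictly smaller dimension. Unrolling the convex combinations $x'_i=\beta_i b^{(\prec_i,o^{T^*})}+\gamma_i x'_{i+1}$ back to $x'_1=w$ yields the $\lambda_i$, and termination within $k$ iterations follows because each step lowers the dimension of the current face by at least one.

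For the running time of Phase~2 I would simply sum the per-line costs already recorded over the at most $k$ iterations of the loop. The dominant per-iteration cost is Line~4, the lex-max dynamic flow computation by Hoppe--Tardos together with Orlin's min-cost flow and Thorup's shortest-path routine, at $O(mk(m+n\log\log n)\log n)$; multiplied by $k$ this gives $O(mk^2(m+n\log\log n)\log n)$. The ray--hyperplane intersections in Line~5 cost $O(k\,|\hat{\mathcal{A}}|)=O(k^{d+1})$ per iteration, hence $O(k^{d+2})$ in total, which is absorbed into the Phase~1 term $O(mndk^d)$ since $k^2=O(mnd)$; the remaining Lines~3, 6, 7 contribute only lower-order terms. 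Adding the two phases gives the claimed bound $O(mndk^d+mk^2(m+n\log\log n)\log n)$.

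The main obstacle is the correctness argument for Algorithm~\ref{algo:find_QF}, not the complexity bookkeeping. The delicate point I expect is verifying that at each iteration a total order $\prec_i$ consistent with \emph{all} previously fixed chain members $\bar{A}_1\subseteq\cdots\subseteq\bar{A}_i$ still realizes a vertex of the current face $B_i$, and that inserting $A_{i+1}$ into this chain in Line~7 keeps the sets nested so that the next order exists. This rests essentially on Lemma~\ref{lemm:eqvlt_bp}, which lets me work inside $\hat{B}(o^{T^*})$, and on the correspondence \eqref{eq:vertex_order_B} between total orders and vertices of a base polytope; once the chain/face invariant is secured, both the decomposition of $w$ and the iteration bound are immediate.
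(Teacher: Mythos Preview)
Your proposal is correct and mirrors the paper's own argument: the paper likewise obtains the first term from Theorem~\ref{theo:cal_min_T_time} (together with Lemma~\ref{lemm:enu_A_hat}) and the second term from a line-by-line running-time analysis of Algorithm~\ref{algo:find_QF}, with Line~4 dominating at $O(mk(m+n\log\log n)\log n)$ per iteration over at most $k$ iterations. Your discussion of correctness (the face invariant, dimension drop, and chain insertion in Line~7) is in fact more explicit than the paper's sketch in Section~\ref{sec:cal_flow}, but the underlying approach is identical.
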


\section{More-efficient Algorithms for Grid Networks}\label{sec:grid_net}

In this section, we consider evacuation problems 
on bidirected grid networks with uniform edge capacity and transit time.
The {\it bidirected grid network} ${\cal G}$ is 
a dynamic flow network with a square grid graph $G = (V, E)$
such that $V$ consists of $n = N \times N$ nodes and 
each pair of adjacent nodes $u$ and $v$ are linked by edges $(u, v)$ and $(v, u)$
(see Fig.~\ref{fig:grid_net}\subref{fig:grid_net_undirected}). 
Let us assume that the capacity and transit time of all edges are uniform in ${\cal G}$, 
and the source set $S^+ = V \setminus \{s^-\}$.
Throughout this section, we abuse $\tau$ as the constant transit time of every edge
since $\tau$ is a constant function.

\begin{remark}
Kamiyama et~al.~\cite{Kamiyama2006} considered other grid graphs in 
which each pair of adjacent nodes $u$ and $v$ are linked by an edge $(u, v)$
if and only if $u$ is closer than $v$ to the sink $s^-$.
See Fig.~\ref{fig:grid_net}\subref{fig:grid_net_kamiyama}.
\end{remark}
\begin{figure}[htbp]
\begin{tabular}{cc}
  \begin{minipage}[t]{0.45\linewidth}
    \centering
    \includegraphics[width=2.4cm]{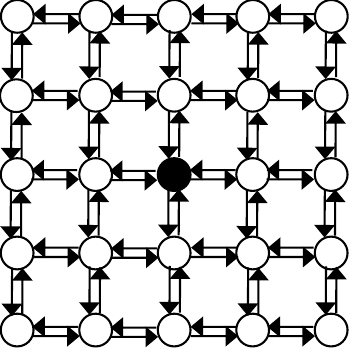}
    \subcaption{
    A bidirected grid network:
    for every pair of adjacent nodes $u$ and $v$, there are two edges $(u,v)$ and $(v,u)$ of the same transit time and capacity
    }\label{fig:grid_net_undirected}
    \end{minipage}
  & \hspace{5pt}
  \begin{minipage}[t]{0.45\linewidth}
    \centering
    \includegraphics[width=2.4cm]{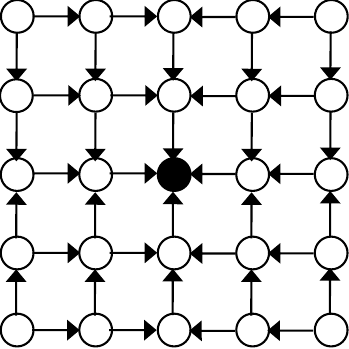}
    \subcaption{
    A grid network considered by Kamiyama et~al.~\cite{Kamiyama2006}: the edges are oriented so that supplies can only take the shortest path to the sink}\label{fig:grid_net_kamiyama}
  \end{minipage}
\end{tabular}
  \caption{
  Two types of grid network. The white and black circles represent sources and sinks, respectively.
  }\label{fig:grid_net}
\end{figure}

For bidirected grid network $\mathcal{G}$ with $n$ nodes, we have the number of edges $m = O(n)$,  the in-degree of the sink $d = 4$, and the number of sources $k = n-1$.
Then, Theorem~\ref{theo:cal_min_flow} implies that 
the evacuation problem on $\mathcal{G}$ can be solved in $O(n^6)$ time. 
To propose more-efficient algorithms for the evacuation problems on $\mathcal{G}$,
we give a stricter analysis of the existence of a subset that admits $(v_1,\ldots,v_p) \in (S^{+})^p$. 
Recall that $\hat{A}_{(v_1,\ldots,v_p)}$ is a subset $A \subseteq S^+$ such that 
$w(A)$ is the largest among all $A$ admitting $(v_1, \ldots, v_p)$ for each $p \in \{1,2,3,4\}$ and $(v_1,\ldots,v_p) \in (S^{+})^p$, 
and $\hat{\mathcal{A}}$ is the family of all $\hat{A}_{(v_1,\ldots,v_p)}$.
In Section~\ref{sebsec:hat_A_grid}, we show that the cardinality of $\hat{\mathcal{A}}$ is 
$O(n^2)$ instead of $O(n^4)$.
This fact implies the following theorem
by the same proof as that of Theorem~\ref{theo:cal_min_flow}.

\begin{theorem}\label{thm:time_grid}
Given a bidirected grid network ${\cal G}$ and a supply/demand function $w$, 
the evacuation problems can be solved in $\tilde{O}(n^4)$ time. 
\end{theorem}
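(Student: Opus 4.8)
The plan is to reuse the two-phase framework behind Theorem~\ref{theo:cal_min_flow} — first compute $T^*$ through the family $\hat{\mathcal{A}}$, then recover the quickest flow with Algorithm~\ref{algo:find_QF} — and simply re-run its running-time accounting under the grid parameters $m = O(n)$, $d = 4$, $k = n-1$, together with the sharpened bound $|\hat{\mathcal{A}}| = O(n^2)$ to be established in Section~\ref{sebsec:hat_A_grid}. Since $d = 4$ is constant, every occurrence of $k^d$ and of $d$ collapses to a power of $n$, so the argument becomes pure bookkeeping in $n$.

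First I would dispose of the quickest-flow phase, which is already fast enough. In Algorithm~\ref{algo:find_QF} the per-iteration cost is dominated by the lex-max dynamic flow computation of Hoppe and Tardos, which costs $O(mk(m + n\log\log n)\log n) = \tilde{O}(n^3)$ on the grid, while Line~5 costs $O(k\,|\hat{\mathcal{A}}|) = O(n\cdot n^2) = O(n^3)$ once $|\hat{\mathcal{A}}| = O(n^2)$ is available. As the loop runs at most $k = O(n)$ times, this phase totals $\tilde{O}(n^4)$; equivalently, the general bound $O(mk^2(m + n\log\log n)\log n)$ of Theorem~\ref{theo:cal_min_flow_2} already evaluates to $\tilde{O}(n^4)$ here.

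The substantive work is the $T^*$-computation phase, whose general bound $O(nmdk^d)$ naively evaluates to $O(n^6)$ and must be cut to $\tilde{O}(n^4)$. Here I would invoke the structural result of Section~\ref{sebsec:hat_A_grid}: not only is $|\hat{\mathcal{A}}| = O(n^2)$, but the characterization of which tuples $(v_1,\ldots,v_p)$ admit a source subset lets us enumerate these $O(n^2)$ members of $\hat{\mathcal{A}}$ directly, rather than scanning all $\Theta(k^d) = \Theta(n^4)$ tuples. Each member, together with its paths $P^A_1,\ldots,P^A_{p^A}$, is produced in $O(nmd) = O(n^2)$ time by Lemma~\ref{lemm:comp_A_hat_time}, so Step~1 costs $O(n^2)\cdot O(n^2) = O(n^4)$. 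Step~2 then evaluates $\theta(A)$ for each of the $O(n^2)$ subsets in $O(n)$ time via~\eqref{eq:min_need}, a further $O(n^3)$. Hence the $T^*$ phase also runs in $\tilde{O}(n^4)$, and summing the two phases proves the claim.

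The main obstacle is confined to Section~\ref{sebsec:hat_A_grid}: the cardinality bound $|\hat{\mathcal{A}}| = O(n^2)$ by itself does \emph{not} suffice, because testing admissibility tuple-by-tuple would still cost $\Theta(n^4)\cdot O(n^2) = \Theta(n^6)$. What must genuinely be proven is the stronger \emph{enumerability} statement — that the special geometry of the bidirected grid (uniform transit times, so that residual distances are governed by the graph metric, and only the at most four in-neighbors of $s^-$ can carry the successive shortest paths) forces the admitting tuples into an explicitly listable family of size $O(n^2)$. Once that structural characterization is in hand, the time bound above follows by exactly the accounting used for Theorem~\ref{theo:cal_min_flow}.
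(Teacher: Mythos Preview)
Your proposal is correct and follows the same approach as the paper: the paper also states the theorem ``by the same proof of Theorem~\ref{theo:cal_min_flow},'' replacing the generic $O(k^d)$ count of tuples by the $O(n^2)$ bound established in Section~\ref{sebsec:hat_A_grid}, and both phases then evaluate to $\tilde{O}(n^4)$ exactly as you computed. Your explicit observation that mere cardinality $|\hat{\mathcal{A}}|=O(n^2)$ is not enough and that one needs an \emph{enumerable} $O(n^2)$-size superset of admitting tuples is well taken; the paper addresses this implicitly, since Lemma~\ref{lemm:necessary_exist_A} gives constructive necessary conditions and Lemma~\ref{lemm:I_num} bounds the resulting candidate set $I$ by $O(n^2)$, so one simply iterates over $I$ rather than over all of $(S^+)^p$.
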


\subsection{Analysis of the Cardinality of $\mathcal{\hat{A}}$}\label{sebsec:hat_A_grid}
In this subsection, for a bidirected grid network $\mathcal{G}$ with $n$ nodes, 
we show that $|\hat{\mathcal{A}}|$ is $O(n^2)$.
We give a necessary condition for the existence of source subsets admitting
$(v_1, \ldots, v_p)$ for given $p \in \{1,2,3,4\}$ and $(v_1, \ldots, v_p)$.
Fig.~\ref{fig:not_exist_A} shows an example in which there is no subset admitting some $(v_1,\ldots ,v_p)$.

\begin{figure}[tb]
\begin{tabular}{cc}
  \begin{minipage}[t]{0.45\linewidth}
    \centering
    \includegraphics[width=40mm]{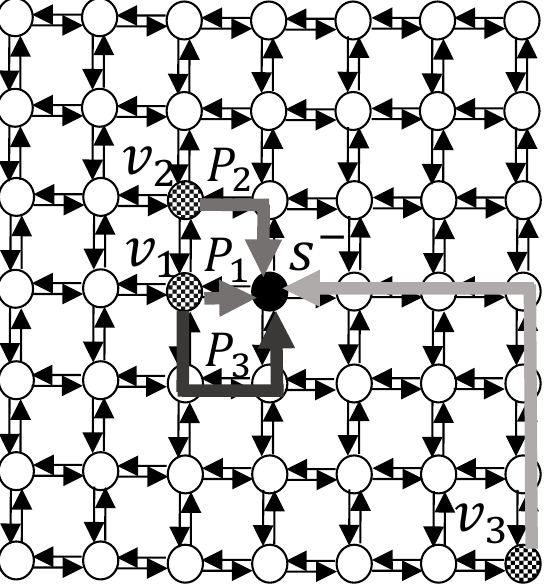}
    \caption{Example of $(v_1,\ldots ,v_p)$ such that there is no source subset $A$ admitting
    $(v_1,\ldots ,v_p)$. 
    In this case, there is no source subset $A$ for which the origin of $P^A_3$ is $v_3$
    because a path $P_3$ from $v_1$ to $s^-$ is shorter than any path from $v_3$ to $s^-$.  
    }\label{fig:not_exist_A}
    \end{minipage}
  & \hspace{5pt}
  \begin{minipage}[t]{0.45\linewidth}
    \centering
        \includegraphics[width=45mm]{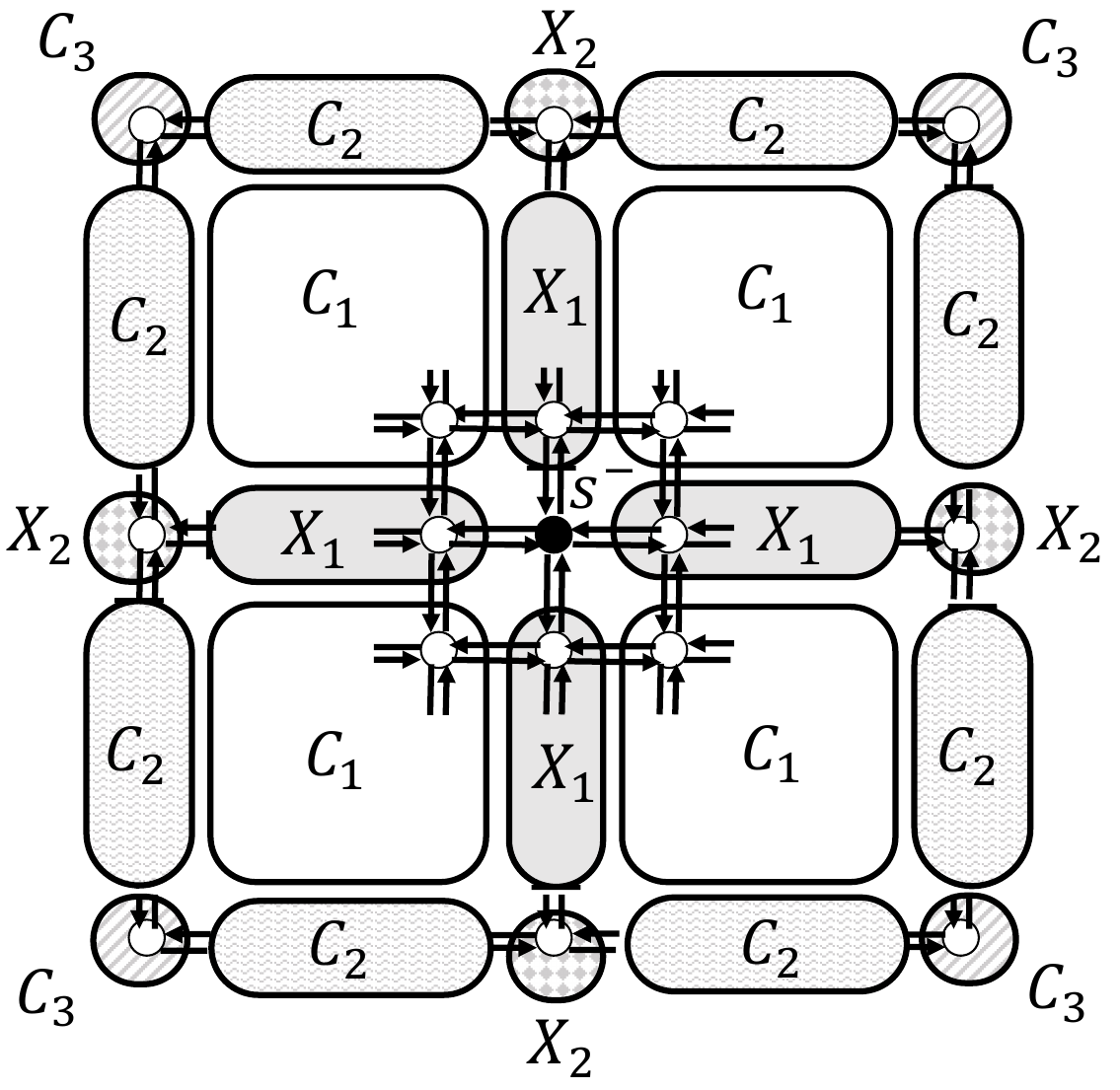}
        \caption{Areas of source subsets $C_1$, $C_2$, $C_3$, $X_1$, and $X_2$.}\label{fig:grid_graph_area_2}
    \end{minipage}
\end{tabular}
\end{figure}

For the analysis, 
we divide node set $V \setminus {s^-}$ into five areas $C_1,C_2,C_3,X_1,$ and $X_2$ as in Fig.~\ref{fig:grid_graph_area_2}. 
Let us define $\chi:=\{C_1,C_2,C_3,X_1,X_2\}$.
For any nodes $u, v \in V$, let $|uv|$ denote the minimum cost of a path from $u$ to $v$ on some static network. 
If a source subset $A$ admits $(v_1, \ldots, v_p)$, then 
$|v_1 s^-| \leq |v_i s^-|$ holds for any $i \in \{1, \ldots, p\}$
by the behavior of Algorithm~\ref{algo:successive_path}.

\begin{lemma}\label{lemm:necessary_exist_A}
For a bidirected grid network ${\cal G}$, $p\in \{1,2,3,4\}$, and $(v_1, \ldots ,v_p)\in (S^+)^p$, 
let us assume that there is a source subset admitting $(v_1, \ldots ,v_p)$. 
Then $(v_1, \ldots ,v_p)$ satisfy the following conditions:
\begin{enumerate}[(i)]
    \item if $v_1\in C_1$, then $v_1 = v_2$, $|v_3s^-|\leq |v_1s^-|+4\tau$,
    and $|v_4s^-|\leq |v_1s^-|+4\tau$ hold;
    \item if $v_1\in C_2$, then $v_1 = v_2$ and $|v_3s^-| \leq |v_1s^-| + 4\tau$ hold;
    \item if $v_1\in C_3$, then $v_1=v_2$ holds;
    \item  if $v_1\in X_1$, then $|v_2s^-|\leq |v_1s^-|+2\tau$, $|v_3s^-|\leq |v_1s^-|+2\tau$, and $|v_4s^-|\leq |v_1s^-|+8\tau$ hold;
    \item  if $v_1\in X_2$, then $|v_2s^-|\leq |v_1s^-|+2\tau$ and $|v_3s^-|\leq |v_1s^-|+2\tau$ hold.
\end{enumerate}
\end{lemma}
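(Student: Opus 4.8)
The plan is to phrase everything in terms of the sequence of residual shortest‑path distances generated by Algorithm~\ref{algo:successive_path}. For $j\ge 0$ let $\pi_j(v)$ denote the minimum cost of a path from $v$ to $s^-$ in the residual network $\bar{\mathcal N}_{\bar f}$ after the first $j$ augmentations have been added to $\bar f$; thus $\pi_0(v)=|vs^-|$, which on a network with uniform transit time $\tau$ equals $\tau$ times the Manhattan distance from $v$ to $s^-$. The standard monotonicity of node potentials in the successive shortest path method gives $\pi_{j-1}(v)\le \pi_j(v)$ for every $v$ and $j$. Since $v_i$ is by definition the origin of the $i$‑th path it minimises $\pi_{i-1}$ over $A$, and $v_1\in A$, so
\[
|v_i s^-| = \pi_0(v_i)\ \le\ \pi_{i-1}(v_i)\ \le\ \pi_{i-1}(v_1).
\]
Hence every inequality of the form $|v_i s^-|\le |v_1 s^-|+c\tau$ in the statement reduces to the purely geometric estimate $\pi_{i-1}(v_1)\le \pi_0(v_1)+c\tau$, i.e.\ a bound on how much farther $v_1$ must travel to reach $s^-$ once $i-1$ augmentations have been performed.

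Next I would exploit the structure of the sink on a grid. Because no flow ever leaves $s^-$ in the computation of $o^T$, every augmenting path ends with one of the $d=4$ original edges of $\delta^+(s^-)$, and each augmentation saturates a distinct such edge; consequently $P^A_1,\dots,P^A_p$ enter $s^-$ through $p$ distinct neighbours and $p\le 4$. From a fixed source $v_1$ the four edges of $\delta^+(s^-)$ are reached, in the original static network $\bar{\mathcal N}$, at costs differing by multiples of $2\tau$: if $v_1$ lies strictly inside a quadrant of $s^-$ there are \emph{two} shortest entering edges and two that cost $2\tau$ more, whereas if $v_1$ is aligned with $s^-$ there is a single shortest entering edge and three that cost $2\tau$ more. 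I would observe that rerouting $v_1$'s path from one entering edge to another costs exactly one detour of $2\tau$ in the grid when nothing is in the way, and that the partition $\{X_1,X_2\}$ versus $\{C_1,C_2,C_3\}$ drawn in Figure~\ref{fig:grid_graph_area_2} records precisely this data for $v_1$ — how many edges of $\delta^+(s^-)$ are reached by shortest paths, and how the remaining detours interact with the grid boundary and with the previously routed paths.

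With these two ingredients the proof becomes a case analysis over the five regions. For each region I would, after saturating the $i-1$ entering edges cheapest for $v_1$, exhibit an explicit reroute of $v_1$'s path to the next free entering edge; the clean single detour gives the $+2\tau$ entries (the second entering edge in $X_1,X_2$) and two forced detours give the $+4\tau$ entries (in $C_1,C_2$), while the $+8\tau$ bound for $v_4$ in $X_1$ corresponds to the case where three entering edges are already gone and the surviving route must in addition skirt the corridors occupied by $P^A_1,P^A_2,P^A_3$. For the equalities $v_1=v_2$ in $C_1,C_2,C_3$ I would argue by minimality: I expect to show $\pi_1(v_1)=\pi_0(v_1)$ together with the fact that $v_1$ is the \emph{unique} source attaining residual distance $\pi_0(v_1)$ after the first augmentation, so the origin of $P^A_2$ can be no other source (any rival would have been strictly closer than $v_1$ already at step $1$, contradicting $v_1$ being the origin of $P^A_1$). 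The main obstacle is exactly these last two points: controlling how the earlier paths $P^A_1,\dots,P^A_{i-1}$ can \emph{block} the ideal $2\tau$ detours and thereby inflate $\pi_{i-1}(v_1)$ up to the stated $4\tau$ and $8\tau$, and establishing the uniqueness needed for the node‑equalities $v_1=v_2$. Both are where the precise shape of each region against the grid boundary, as fixed in Figure~\ref{fig:grid_graph_area_2}, must be used, and I would handle them region by region rather than by a single uniform bound.
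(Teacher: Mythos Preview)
Your plan matches the paper's: it too bounds $|v_is^-|$ by exhibiting, after the first $i-1$ augmentations, an explicit residual path from $v_1$ to $s^-$ of the stated length (Figure~\ref{fig:C1_path} for case~(i), the remaining cases being declared ``similar''), and it obtains $v_1=v_2$ for $v_1\in C_1,C_2,C_3$ from the existence of two edge-disjoint shortest $v_1$--$s^-$ paths, i.e.\ from $\pi_1(v_1)=\pi_0(v_1)$. One simplification worth borrowing is the paper's preliminary reduction via Lemma~\ref{lemm:exist_A} to the set $A'=\{v_1,\dots,v_p\}$: once the admitting set is this small, the paths $P^{A'}_1,\dots,P^{A'}_{i-1}$ are completely determined by $v_1,\dots,v_{i-1}$, so your ``blocking'' obstacle becomes a single concrete residual picture rather than a worst case over all admitting $A$. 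Your parenthetical justification for uniqueness in the $v_1=v_2$ step is not correct as written---a rival $v$ with $\pi_1(v)=\pi_0(v_1)$ need only have been \emph{tied} with $v_1$ at step~1, not strictly closer---but the paper does not argue uniqueness either; from $\pi_1(v_1)=\pi_0(v_1)$ it simply asserts that Algorithm~\ref{algo:successive_path} again selects $v_1$ as the origin of $P^{A'}_2$, so you are not missing an idea the paper actually supplies.
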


\begin{proof}
First, for any $v \in S^+$, there are at least two (edge-)disjoint paths from $v$ to $s^-$ in a grid graph $G$.
Thus, no subset admits any 1-tuple $(v_1) \in (S^+)^1$.
We consider the case of $p \geq 2$ and give the proof for case (i), where $v_1 \in C_1$ holds 
(the other cases can be proved in a similar way).
There are four paths from $v$ to $s^-$ in a grid graph $G$ (see Fig.~\ref{fig:C1_path}).
Thus, there exist subsets admitting $(v_1, \ldots, v_p)$ only if $p = 4$. 

In the rest of the proof, we give the necessary condition that there exists a subset admitting 
$(v_1, v_2, v_3, v_4) \in (S^+)^4$ with $v_1 \in C_1$.
By Lemma~\ref{lemm:exist_A}, it is enough to consider 
the necessary condition such that $\{v_1,v_2,v_3,v_4\}$ admits $(v_1, v_2, v_3, v_4)$.
Let $A'$ denote $\{v_1,v_2,v_3,v_4\}$. 

First, we show that $v_1 = v_2$ holds. 
There are two (edge-)disjoint shortest paths from $v_1$ to $s^-$ in $\overline{\mathcal{G}}$ (see Fig.~\ref{fig:C1_path}). 
This means that if the origin of $P^{A'}_1$ is $v_1$, then 
Algorithm~\ref{algo:successive_path} chooses the shortest path $P^{A'}_2$ with the origin $v_1$.
Thus, we have $v_1 = v_2$.

Next, we prove that $|v_3s^-|\leq |v_1s^-| + 4\tau$ holds. 
In the residual network of ${\cal G}$ constructed by $P^{A'}_1$ and $P^{A'}_2$, 
there exists a path from $v_1$ to $s^-$ of length $|v_1s^-| + 4\tau$;
see path $P_3$ in Fig.~\ref{fig:C1_path}.
Therefore, the distance from $v_3$ to $s^-$ should be less than or equal to $|v_1s^-| + 4\tau$, 
that is, $|v_3s^-| \leq |v_1s^-| + 4\tau$ holds.
In the same way, we can prove that $|v_4s^-|\leq |v_1s^-| + 4\tau$ holds 
because there exists another path from $v_1$ to $s^-$ of cost $|v_1s^-| + 4\tau$; see path $P_4$ in Fig.~\ref{fig:C1_path}.
\end{proof}

\begin{figure}[tb]
\centering
\includegraphics[width=30mm]{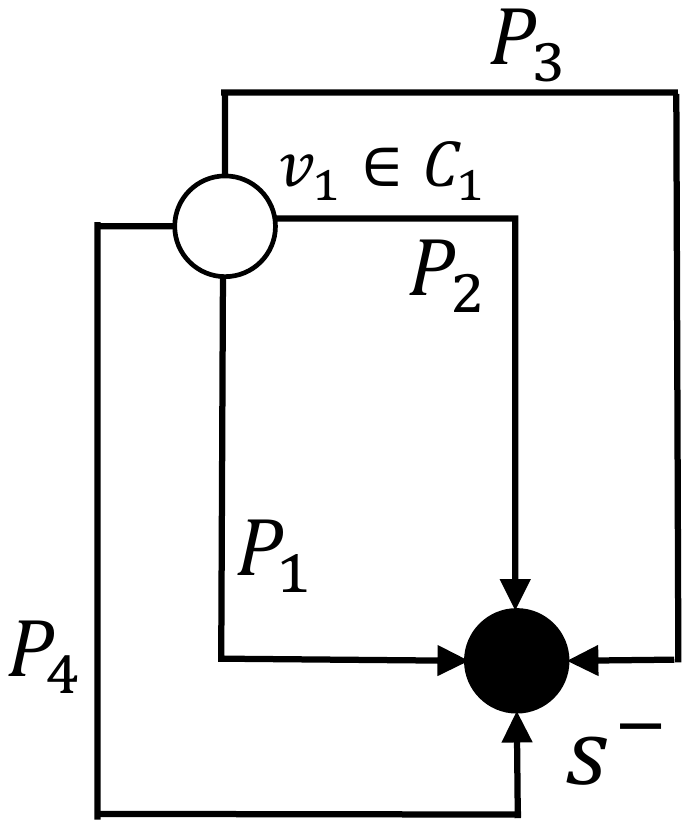}
\caption{Four paths from $v_1$ to $s^-$ if $v_1 \in C_1$ holds.}\label{fig:C1_path}
\end{figure}

For $p \in \{2,3,4\}$ and $C \in \chi$, 
let $I^p_{C}$ denote the sets of $(v_1,\ldots, v_p)$ with $v_1 \in C$ satisfying one of cases (i)--(v). 
Letting $I$ be the union of all sets $I^p_{C}$, we have the following lemma 
implying that the size of $\mathcal{\hat{A}} = O(n^2)$ because $|\mathcal{\hat{A}}| \leq |I|$ holds.

\begin{lemma}\label{lemm:I_num}
Given a bidirected grid network ${\cal G}$, $|I| = O(n^2)$ holds.
\end{lemma}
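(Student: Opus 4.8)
The plan is to show that $|I| = O(n^2)$ by bounding the number of valid tuples $(v_1, \ldots, v_p)$ in each set $I^p_C$ separately, then summing over the $O(1)$ many choices of $p \in \{2,3,4\}$ and $C \in \chi$. The key observation driving the bound is that once we fix the first source $v_1$, each of the conditions in Lemma~\ref{lemm:necessary_exist_A} restricts the remaining sources $v_2, \ldots, v_p$ to lie within a bounded ``distance shell'' around $s^-$: specifically, each $v_i$ with $i \geq 2$ must satisfy $|v_i s^-| \leq |v_1 s^-| + c\tau$ for some constant $c \in \{2,4,8\}$. Since the grid is uniform with transit time $\tau$ on every edge, the number of nodes $v$ with $|v s^-|$ lying in a fixed-width interval $[|v_1 s^-|, |v_1 s^-| + c\tau]$ is $O(1)$: the nodes at a given graph-distance from $s^-$ form the boundary of an $\ell_1$-ball, and a constant number of consecutive such shells contains only $O(c)$ shells, each of which—crucially—can contribute only $O(1)$ candidate sources once $v_1$ is pinned down, because the relevant shells near $v_1$ have bounded intersection with the feasible region.

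First I would treat the cases where $v_1$ lies in a corner region $C_1, C_2, C_3$. In these cases Lemma~\ref{lemm:necessary_exist_A} forces $v_1 = v_2$, which removes one free coordinate entirely; the remaining coordinates $v_3$ (and $v_4$ in case (i)) are each confined to $|v_\cdot s^-| \leq |v_1 s^-| + 4\tau$. The count is therefore $\sum_{v_1} O(1)^{p-2}$, and since $v_1$ ranges over $O(n)$ nodes this already gives $O(n)$ tuples for each corner case—well within the $O(n^2)$ budget. The more delicate cases are $X_1$ and $X_2$, where $v_1 = v_2$ is not forced. Here I would argue that for each of the $O(n)$ choices of $v_1$, the constraints $|v_2 s^-|, |v_3 s^-| \leq |v_1 s^-| + 2\tau$ and (in case (iv)) $|v_4 s^-| \leq |v_1 s^-| + 8\tau$ each confine the corresponding source to $O(1)$ candidate nodes. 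The essential point is that in a grid, the constraint $|v s^-| \leq r$ together with $|v_1 s^-| \leq |v s^-|$ (which holds since $v_1$ is the origin of the shortest path) pins $v$ into an annulus of $O(1)$ graph-distance width, and within such an annulus the number of grid nodes compatible with the geometric position of $v_1$ is constant.

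The main obstacle I anticipate is making precise the claim that a bounded-width distance annulus around $s^-$ contains only $O(1)$ \emph{relevant} candidates for each fixed $v_1$, rather than $O(\sqrt{n})$ candidates (the annulus itself, being the perimeter of an $\ell_1$-ball, generically has $\Theta(\sqrt{n})$ nodes). The resolution must exploit more than just the distance bound: one needs that $v_2, \ldots, v_p$ are not merely close to $s^-$ in distance but are geometrically tied to $v_1$ through the structure of the edge-disjoint shortest paths used in Algorithm~\ref{algo:successive_path}. I would therefore strengthen the analysis by observing that the feasible sources for later coordinates must lie in the intersection of the distance annulus with a bounded-size region determined by the location of $v_1$ (for instance, the region swept out by the residual shortest paths), and that this intersection has $O(1)$ nodes. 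If a direct $O(1)$-per-coordinate bound proves elusive, a fallback giving $|I| = O(n^2)$ suffices: even bounding the number of candidates for a single later coordinate by $O(n)$ while keeping all others at $O(1)$ yields $O(n) \cdot O(n) = O(n^2)$, which still meets the stated bound and hence suffices to conclude $|\hat{\mathcal{A}}| = O(n^2)$.
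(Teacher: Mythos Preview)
Your central counting claim is wrong, and the paper's argument is genuinely different from what you propose. You assert that once $v_1$ is fixed, each later coordinate $v_i$ has only $O(1)$ candidates because it lies in a bounded-width distance annulus around $s^-$. But in an $N\times N$ grid with $n=N^2$, the set of nodes $v$ with $|vs^-|$ in an interval of width $c\tau$ is the union of $O(c)$ perimeters of $\ell_1$-balls, and each such perimeter has $\Theta(N)=\Theta(\sqrt{n})$ nodes, not $O(1)$. You yourself flag this as an obstacle and then try to rescue the bound by appealing to ``geometric ties to $v_1$'' coming from the residual shortest paths. That rescue cannot work here: the sets $I^p_C$ are defined \emph{only} by the distance inequalities in Lemma~\ref{lemm:necessary_exist_A}, which involve nothing about $v_1$ beyond the single scalar $|v_1 s^-|$. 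There is no extra geometric constraint available to cut the annulus down to $O(1)$ points, and your fallback still assumes such an $O(1)$ bound on all but one coordinate, so it fails for the same reason.

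The paper's proof uses the honest $O(\sqrt{n})$ count per annulus-constrained coordinate and balances it against the size of the region containing $v_1$. For $v_1\in C_1$ (and likewise $C_2,C_3$) the constraint $v_1=v_2$ kills one free coordinate, so the count is $|C_1|\cdot 1\cdot O(\sqrt{n})\cdot O(\sqrt{n})=O(n)\cdot O(n)=O(n^2)$. For $v_1\in X_1$ or $X_2$ there is no $v_1=v_2$ identification, but these cross regions are one-dimensional, $|X_j|=O(\sqrt{n})$, so the count is $O(\sqrt{n})\cdot O(\sqrt{n})^3=O(n^2)$. Your write-up treats every region as having $O(n)$ choices for $v_1$; with the correct $O(\sqrt{n})$ per later coordinate that would give $O(n^{5/2})$ in the $X$ cases, which is why the size of $X_1,X_2$ matters and why the argument you sketch does not close.
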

\begin{proof}
We show that $|I^4_{C_1}| = O(n^2)$ holds because $|I^p_{C_1}| = 0$ for $p = 2, 3$ by the proof of Lemma~\ref{lemm:necessary_exist_A}.
Let us assume that $v_1 \in C_1$ holds.
We estimate the number of candidates of $v_2$, $v_3$, and $v_4$
such that $(v_1, \ldots, v_4) \in I^4_{C_1}$ holds.
By Lemma~\ref{lemm:necessary_exist_A}, we have $v_2 = v_1$. 
The number of candidates of $v_3$ is $O(N) = O(\sqrt{n})$
because $v_3$ satisfies $|v_1s^-| \leq |v_3s^-| \leq |v_1s^-| + 4\tau$.
In the same way, the number of candidates of $v_4$ is also $O(\sqrt{n})$.
Therefore, $|I^4_{C_1}| = O(n) \times 1 \times O(\sqrt{n}) \times O(\sqrt{n}) = O(n^2)$
because $|C_1| = O(n)$.

In a similar way, one can show that $|I^p_{C}| = O(n^2)$ for other $C \in \{C_2, X_1, X_2\}$
and $|I^p_{C}| = O(1)$.
\end{proof}
\section{Concluding Remarks}\label{sec:quickest_p}
In this paper, we propose efficient algorithms for the evacuation problem without calling an algorithm for the submodular function minimization. 
The proposed algorithms can be generalized for solving the quickest transshipment problem in which we consider the dynamic network contains multiple sinks. 
Let $\ell$ be the number of sinks, 
$d'$ be the sum of the in-degrees of all sinks. 
In generalization, we need to consider the following two things:
(1) For computing $\theta(A)$, we distinguish $2^\ell$ cases: ``which sinks are contained in a terminal subset $A$."
(2) The number of paths obtained by the successively shortest path algorithm is at most $d'$. 
In this case, the size of $\mathcal{\hat{A}}$ is at most $2^{\ell} k^{d'}$
instead of $k^d$. 
Thus the running time of the generalized algorithm is 
$\tilde{\mathcal{O}}\left(m n d' k^{d'} 2^\ell + k^2m^2 \right)$, 
which is exponentially large on the number of sinks. 
If we are given a dynamic network that forms a grid-like network, and contains many sources and two sinks,
that is, $m = O(n)$, $k = O(n)$ and $d' \leq 8$, 
then its running time is $\tilde{\mathcal{O}}\left(n^{10}\right)$, 
which does not compete $\tilde{\mathcal{O}}\left(n^7\right)$ time algorithms by Kamiyama~\cite{Kamiyama2019} and Schl\"{o}ter~\cite{Schloter2018}.

Therefore, it is a future natural work to construct 
a more efficient algorithm for the quickest transshipment problems.

%We can find $T^*$ for the quickest problem by executing an algorithm given by modifying Algorithm \ref{algo:cal_opt_T} such that for all sink subsets $H\subseteq S^-$, Lines \ref{line:loop_cal_A_hat} to \ref{line:loop_cal_A_hat_end} are executed on a dynamic network which is created by merging the sink subset $H$. 

%\subsubsection{Acknowlegdement}

%
% ---- Bibliography ----
%
% BibTeX users should specify bibliography style 'splncs04'.
% References will then be sorted and formatted in the correct style.
%
% \bibliographystyle{splncs04}
% \bibliography{mybibliography}
%

\end{document}